\newlength{\flexwidth}
\newtheorem{lemma}{Lemma}
\newacronym{ue}{UE}{User Equipment}
\newacronym{bs}{BS}{base station}
\newacronym{csi}{CSI}{Channel state information}
\newacronym{b5g}{B5G}{Beyond-Fifth-Generation}
\newacronym{6g}{6G}{Sixth Generation}
\newacronym{ml}{ML}{Machine learning}
\newacronym{sbs}{SBS}{small base station}
\newacronym{mu}{MU}{mobile user}
\newacronym{mbs}{MBS}{macro base station}
\newacronym{mse}{MSE}{Mean Squared Error}
\newacronym{cl}{CL}{centralized learning}
\newacronym{uav}{UAV}{unmanned aerial vehicle}
\newacronym{bme}{BME}{Bayesian Model Ensemble}
\newacronym{iid}{IID}{independent and identically distributed}
\newacronym{raf}{RAF}{robust aggregation function}
\newacronym{sgd}{SGD}{stochastic gradient descend}
\newacronym{cdf}{CDF}{cumulative distribution function}
\newacronym{lid}{LID}{local intrinsic dimensionality}
\newacronym{llpf}{LLPF}{local loss pre-filtering}
\newacronym{mitm}{MITM}{man-in-the-middle}
\newacronym{ae}{AE}{adversary entitie}
\newacronym{tof}{TOF}{time of fly}
\newacronym{rssi}{RSS}{received signal strength}
\newacronym{3d}{3D}{three dimensional}
\newacronym{aoa}{DoA}{Direction of Arrival}
\newacronym{sdp}{SDP}{semi-definite programming}
\newacronym{nlos}{NLOS}{Non-Line-of-Sight}
\newacronym{snr}{SNR}{Signal to Noise Ratio}
\newacronym{crb}{CRB}{Cramer-Rao bound}
\newacronym{lse}{LSE}{least squared estimation}
\newacronym{wlse}{WLSE}{weighted least squared estimation}
\newacronym{gd}{GD}{Gradient descend}
\newacronym{ap}{AP}{Access Points}
\newacronym{crlb}{CRLB}{Cramér-Rao Lower Bound}
\newacronym{tdoa}{TDoA}{Time Difference of Arrival}
\newacronym{sinr}{SINR}{Signal to Interference and Noise Ratio}
\newacronym{los}{LOS}{Line of Sight}
\newacronym{a2g}{A2G}{Air to Ground}
\newacronym{eu}{EU}{European Union}
\newacronym{umiav}{UMi-AV}{Urban Micro–Aerial Vehicle}
\newacronym{3gpp}{3GPP}{3rd Generation Partnership Project}
\newacronym{lae}{LAE}{Low Altitude Economy}
\newacronym{gnss}{GNSS}{Global Navigation Satellite System}
\newacronym{rf}{RF}{Radio Frequency}
\newacronym{gpdr}{GDPR}{General Data Protection Regulation}
\newacronym{5gnr}{5G-NR}{Fifth Generation New Radio}
\newacronym{otdoa}{OTDOA}{Observed Time Difference of Arrival}
\newacronym{prs}{PRS}{Positioning Reference Signals}
\newacronym{gnb}{gNB}{Next Generation Node B}
\newacronym{lmf}{LMF}{Localization Management Function}
\newacronym{amf}{AMF}{Access and Mobility Management Function}
\newacronym{lpp}{LPP}{LTE Positioning Protocol}
\newacronym{nrppa}{NRPPa}{NG-RAN Positioning Protocol A}
\newacronym{prc}{PRC}{Positioning Reference Configuration}
\newacronym{dlotdoa}{DL-OTDOA}{Downlink Observed Time Difference of Arrival}
\newacronym{nas}{NAS}{Non-Access Stratum}
\newacronym{ngc}{NG-C}{Next Generation Control Plane}
\newacronym{tls}{TLS}{Transport Layer Security}
\newacronym{rsrp}{RSRP}{Reference Signal Received Power}
\newacronym{rof}{ROF}{RSS-based optimum finder}
\newacronym{tcv}{TCV}{Triangular Consistency Verification}
\newacronym{sdet}{SDET}{Static Distance-Error Thresholding}
\newacronym{rdef}{RDEF}{Recursive Distance-Error Filtering}
\newacronym{lawn}{LAWN}{Low-altitude Wireless Network}
\newacronym{ls}{LS}{Least Squares}
\newacronym{iq}{IQIA-Net}{In-phase Quadrature Intra-attention Network}
\newacronym{qr}{QR}{Quality report}
\DeclareSIUnit{\belmilliwatt}{Bm}
\DeclareSIUnit{\dBm}{\deci\belmilliwatt}
\begin{document}
	
	\title{Network-Centric Anomaly Filtering and Spoofer Localization for 5G-NR Localization}
	
	\author{
	 	Zexin~Fang,~\IEEEmembership{Student Member,~IEEE,}
		Bin~Han,~\IEEEmembership{Senior Member,~IEEE,}
        Zhu~Han,~\IEEEmembership{Fellow,~IEEE,}\\
        Yufei~Zhao,~\IEEEmembership{Member,~IEEE,}
        Yong~Liang~Guan,~\IEEEmembership{Senior Member,~IEEE,}
		and
		Hans~D.~Schotten,~\IEEEmembership{Member,~IEEE}
		\thanks{
    Z. Fang, B. Han, and H. D. Schotten are with University of Kaiserslautern-Landau (RPTU), Germany. Z. Han is with University of Huston, United States. Y. L. Guan and Y. Zhao are with Nanyang Technological University, Singapore.
    H. D. Schotten is with the German Research Center for Artificial Intelligence (DFKI), Germany. B. Han (bin.han@rptu.de) is the corresponding author.}
	}
	\bstctlcite{IEEEexample:BSTcontrol}
	
	% make the title area
	\maketitle
    \begin{abstract}
        This paper investigates security vulnerabilities and countermeasures for the \gls{3gpp} \gls{5gnr} \gls{tdoa}-based \gls{uav} localization in low-altitude urban environments. We first optimize node selection strategies under \gls{a2g} channel conditions, proving that optimal selection depends on \gls{uav} altitude and deployment density, and propose lightweight \gls{ue}-assisted approaches that reduce overhead while enhancing accuracy. Next, we then expose critical security vulnerabilities by introducing merged-peak spoofing attacks where rogue \glspl{uav} transmit multiple \gls{5gnr} \glspl{prs} that merge with legitimate signals, bypassing existing detection methods. Through theoretical modeling and sensitivity analysis, we quantify how synchronization quality and geometric factors determine spoofing success probability, thereby revealing fundamental weaknesses in current \gls{3gpp} positioning frameworks. To address these vulnerabilities, we design a network-centric anomaly detection framework at the \gls{lmf} using \gls{3gpp}-specified parameters, coupled with recursive gradient descent-based robust localization that filters anomalous data while estimating \gls{uav} position. Our unified framework simultaneously provides robust victim localization and spoofer localization, enabling active attacker attribution beyond passive defense. Extensive simulations validate the effectiveness of our optimization and security mechanisms for \gls{3gpp}-compliant \gls{uav} positioning.
    \end{abstract}

	% Note that keywords are not normally used for peerreview papers.
	\begin{IEEEkeywords} \gls{uav}; \gls{tdoa}; localization; \gls{5gnr}.
	
	\end{IEEEkeywords}
	
	\IEEEpeerreviewmaketitle
	
	\glsresetall

	\section{Introduction}\label{sec:introduction}
    The rapid growth of \gls{lae} has driven demand for reliable communication and precise positioning to support applications such as logistics, inspection, agriculture, and emergency response \cite{LAE2025fang}. \glspl{lawn} have emerged as a key enabler, offering seamless connectivity and situational awareness for aerial platforms in low-altitude airspace. Within \glspl{lawn}, \glspl{uav} function both as users and as flexible network relays, dynamically extending terrestrial coverage in infrastructure-limited regions.

    Among the critical requirements for \gls{lae}, accurate localization underpins navigation, swarm coordination, and safety-critical control, particularly when \gls{gnss} performance degrades. While computer-vision and sensing-based methods achieve high precision, they suffer from computational cost, latency, privacy concerns (e.g., \gls{gpdr} compliance), and limited range. In contrast, \gls{rf}-based localization, particularly \gls{tdoa} approaches, provides a lightweight, infrastructure-compatible alternative \cite{tdoa2022sinha}. \gls{5gnr} features such as dense small-cell deployment, wide bandwidth, and sub-microsecond synchronization enable sub-meter \gls{tdoa} accuracy. \gls{3gpp} has steadily improved \gls{5gnr} localization capabilities, starting with the initial framework introduced in Release 16 \cite{3gpporiframe}. Embedding localization within this standards-based infrastructure thus offers a scalable, low-latency solution for \gls{lawn} applications.

    \gls{5gnr}-based localization has gained significant research momentum. The fusion of \gls{5gnr} and \gls{gnss} has been explored to leverage their complementary strengths, enabling highly accurate, reliable, and continuous localization for low-altitude applications \cite{5gnrintrocampolo,TIMbai5gnr2022}. Beyond serving as positioning targets, \glspl{uav} can function as aerial anchor nodes, significantly enhancing ground user localization by providing dominant \gls{los} links where terrestrial infrastructure is limited \cite{5gnruavwang, 5gnruavliang, Locmagazine2024liang}. Since \glspl{uav} serve as both positioning providers and consumers, precise \gls{uav} localization becomes critical.
    Ground infrastructure-based \gls{3d} localization for \glspl{uav} has attracted attention due to distinct \gls{a2g} channel characteristics. Machine learning-based frameworks using \gls{rssi} from cellular infrastructure \cite{uavlocafifi2021} and \gls{tdoa}-based algorithms exploiting velocity data with convex \gls{ls} optimization \cite{uavmitie2024} have demonstrated \gls{gnss}-independent positioning capabilities. The impact of antenna characteristics has been analyzed through \gls{crb} derivations for various configurations \cite{tdoa2022sinha}, while experimental studies using real flight data have validated altitude-dependent \gls{los} characteristics and their impact on positioning accuracy \cite{uavdickerson2025}. However, practical deployment considerations for ground infrastructure remain underexplored. While \gls{crb} analysis suggests that increasing the number of reference nodes improves localization accuracy, real-world deployment constraints often limit this benefit. Without intelligent link selection, including distant \glspl{gnb} may actually degrade accuracy due to unreliable \gls{tdoa} measurements caused by higher attenuation and reduced \gls{los} probability. This paper addresses this gap by proposing optimal \gls{gnb} selection strategies that account for deployment density, \gls{los} conditions, and \gls{uav} altitude, thereby enhancing the practical viability of \gls{5gnr} localization frameworks for \gls{uav}.

    On the other hand, recent works have revealed 3GPP-compliant vulnerabilities in \gls{5gnr} \gls{tdoa}-based positioning that can manipulate signal timing without disrupting communication. In \cite{spawcspcrosara2025, wcncspzanini2025}, two attack types are investigated: analytical spoofing strategies that alter \gls{prs} propagation times from reference and auxiliary base stations, and full-frame meaconing attacks that intercept, delay, and retransmit entire frames, including \gls{prs}, to bias \gls{tdoa} estimates. Both approaches demonstrate that significant positioning errors can be induced while maintaining normal communication functionality, underscoring the need for robust physical-layer detection mechanisms to safeguard \gls{5gnr} positioning integrity. Corresponding countermeasures have been explored in \cite{wclspspanos2025}, this work proposes a  positioning authentication scheme that secures the \gls{prs} by embedding a hash-based message authentication code (HMAC) into its unused resource elements. A similarity threshold ensures reliable verification under low \gls{snr} and common physical-layer impairments.
   The work in \cite{iccwspFoca2025,twcspfoca2025} focuses on modeling and detecting \gls{prs} spoofing (replay) attacks targeting positioning systems. The authors consider scenarios where an attacker re-transmits delayed PRS replicas, creating separate spoofed and legitimate correlation peaks that bias timing-based measurements. They develop a mathematical threat model and propose detection methods based on cross-correlation analysis and Gaussian Mixture Models (GMMs). The work in \cite{infocom2023Gao, Secure5gnrgao} exposes a critical \gls{5gnr} positioning vulnerability by selectively tampering specific \gls{prs} resource elements to bias location measurements, evading \gls{3gpp} R18 defenses. To defend against attack, the authors propose \gls{iq}, a physical-layer deep learning method that extracts hardware-specific \gls{rf} fingerprints from I/Q samples, converts them into “IQ-images,” and uses an attention mechanism to detect spoofed signals. 
    
    Regardless of attack vector, spoofing aims to create spurious correlation peaks corrupting \gls{tdoa} measurements. Existing countermeasures face practical limitations: \cite{wclspspanos2025} requires protocol modifications and processing overhead impractical for energy-constrained \glspl{uav}, \gls{ue}-based approaches \cite{iccwspFoca2025, twcspfoca2025} burden resource-limited devices, while \gls{gnb}-side methods \cite{infocom2023Gao, Secure5gnrgao} process high-volume I/Q samples through deep neural networks, introducing significant latency when monitoring multiple links. This reveals two {\em critical gaps}. Detection must be lightweight and network-centric to avoid burdening \glspl{ue}. Existing work oversimplifies adversaries by focusing on defense without analyzing attack constraints. Successful \gls{tdoa} spoofing faces fundamental synchronization and power limitations that, when characterized, inform efficient countermeasures. Moreover, current defenses assume spoofed peaks are distinct enough for detection. However, sophisticated adversaries can transmit multiple lower-power pulses that merge with legitimate signals, creating inseparable biases evading existing methods \cite{iccwspFoca2025, twcspfoca2025}. Understanding these constraints is essential for robust defense mechanisms.

    This paper is a rigorous extension of \cite{fang2025lightweight}. While the previous work primarily focused on optimal node selection under \gls{a2g} channels for sensor networks, we shift focus to \gls{5gnr} positioning with \gls{3gpp} compliance considerations. Our key contributions beyond the previous work while addressing above mentioned research gaps include:
    \begin{itemize}
     \item We investigate \gls{ue}-assisted and \gls{lmf}-coordinated node selection strategies for \gls{5gnr}-based \gls{uav} localization, which are undefined in current \gls{3gpp} specifications, revealing that weighted positioning enhances performance, especially under suboptimal node selection.

    \item We introduce merged-peak spoofing attacks that bypass existing detection frameworks by modeling rogue \glspl{uav} as network-integrated attackers. Through theoretical modeling and sensitivity analysis, we quantify how synchronization quality and geometric factors determine attack effectiveness, exposing fundamental vulnerabilities in current schemes.

   \item We design a lightweight, network-centric anomaly detection framework at the \gls{lmf} using existing \gls{3gpp}-specified parameters, coupled with a recursive gradient descent-based algorithm that simultaneously filters anomalies and estimates \gls{uav} position.

   \item Beyond detection, we integrate spoofer localization into the \gls{lmf} using filtered anomaly data, creating a unified framework for robust victim positioning and simultaneous attacker localization—an integration unexplored in existing literature. We analyze the interplay between detection and localization performance.
     \end{itemize}

    The remainder of this paper is organized as follows. Section~\ref{sec:pre} introduces the \gls{5gnr} localization framework and investigates the \gls{a2g} channel model to provide a foundation for subsequent analysis. Section~\ref{sec:sys_model} derives the \gls{crb} for \gls{tdoa} measurements, formulates the localization optimization objective, and proposes a lightweight \gls{ue}-assisted node selection algorithm. Section~\ref{sec:prsthreatmodel} presents the spoofing attack model and our network-centric defense framework. Section~\ref{sec:simu} evaluates the proposed approaches through simulation, and finally Section~\ref{sec:conclu} concludes the paper.

    \section{Preliminaries}\label{sec:pre}
    \subsection{\gls{3gpp} localization framework for otdoa-based localization}
    We consider a \gls{5gnr} localization system compliant with the \gls{3gpp} architecture, where the location of a \gls{ue} (e.g., a \gls{uav}) is estimated using \gls{otdoa}-based localization. In this method, \glspl{gnb} transmit \gls{prs}, and are required to be time-synchronized to ensure accurate multilateration. The localization procedure is coordinated by the \gls{lmf}, which is responsible for computing the \gls{ue}'s location and accounts for inter-cell synchronization errors through calibration.

     \begin{figure}[!ht]
    \centering
    %\begin{subfigure}[b]{0.325\textwidth}
        \centering
        \includegraphics[width=0.99\linewidth]{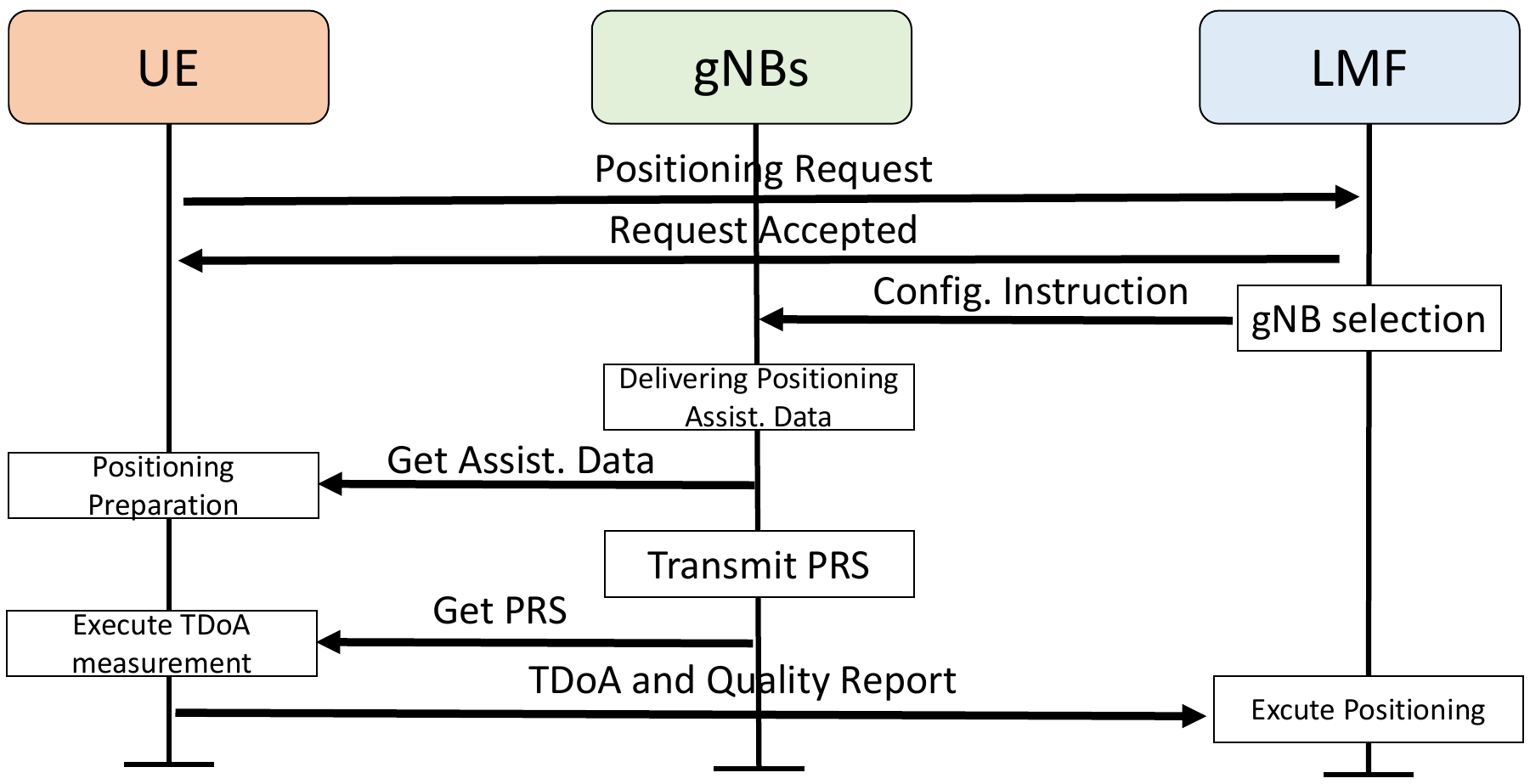}
    \caption{\gls{5gnr} \gls{ue} localization operation flow}
    \label{fig:5gnrflow}
   \end{figure}

  The \gls{amf} handles connection management, access control, and mobility for the \gls{ue}, serving as the signaling anchor between \gls{ue} and core network. While not directly involved in localization, it tunnels \gls{lpp} messages between the \gls{lmf} and \gls{ue}. Key localization protocols and signals for \gls{otdoa}-based positioning are: \begin{enumerate*}[label=\emph{\roman*)}]
  \item\textbf{LPP}:
  \gls{lpp} is the primary signaling protocol used between the \gls{lmf} and the \gls{ue} to facilitate localization procedures. It is standardized in \cite{3gpp.37.355} for \gls{5gnr}. The \gls{lpp} enables the exchange of localization-related messages between the \gls{ue} and \gls{lmf}. First, the \gls{ue} can initiate a localization request. The \gls{lmf} then sends a measurement configuration message, prompting the \gls{ue} to receive localization \gls{prs} from \glspl{gnb} by \gls{prc}. The \gls{ue} subsequently performs measurements and sends a measurement report back to the \gls{lmf}. 
  \item \textbf{NRPPa}: the \gls{lmf} uses the \gls{nrppa} protocol to coordinate \gls{prs} transmission by the \glspl{gnb} \cite{3gpp.38.355}. Through \gls{nrppa}   signaling, the \gls{lmf} sends measurement request messages to instruct one or more \glspl{gnb} to configure and transmit \gls{prs}, specifying parameters such as timing, duration, beam configuration. 
  \item\textbf{PRS}:
  \gls{prs}, defined by the \gls{prc}, are transmitted by \glspl{gnb}. The \gls{ue} receives \gls{prs}, measures \gls{tdoa}, and reports results with \gls{qr} to the \gls{lmf}. \gls{prs} can be time-multiplexed or frequency-multiplexed across \glspl{gnb}. While time-multiplexing is simpler to implement, frequency-multiplexing enables simultaneous transmissions, significantly reducing latency for localizing \gls{uav}.
  \end{enumerate*}
  
  Currently, \gls{dlotdoa} is the standardized \gls{5gnr} localization method, where the \gls{ue} passively measures \gls{tdoa} and reports to the \gls{lmf}, which computes the position estimate (Fig.~\ref{fig:5gnrflow}). This approach reduces signaling overhead and offloads computational burden from energy-constrained \glspl{ue}. Therefore, we consider frequency-multiplexed \gls{prs}-enabled \gls{dlotdoa} as specified in \gls{3gpp} Release 17 for high-accuracy positioning \cite{3enhancefre}. 
  \subsection{\gls{3gpp} A2G channel model}\label{subsec:a2gchannel}
    \gls{3gpp} conducted comprehensive studies on \gls{a2g} channel characteristics in \cite{3gpp-tr36.777}. The \gls{umiav} channel model characterizes \gls{los} probability as: 
     \begin{equation}\label{eq:problos}
        P_\text{los}=  \begin{cases}
       1 ,\quad d_{\text{2D}} \leq d_1 ;\\
       \left(1-\frac{d_1}{d_{\text{2D}}}\right)\exp{\left(\frac{-d_{\text{2D}}}{p_1}\right)}+ \frac{d_1}{d_{\text{2D}}} ,\quad d_{\text{2D}} > d_1.
        \end{cases}
    \end{equation}
    Here, $d_\text{2D}$ denotes the horizontal separation between the aerial platform and terrestrial base station, while $h$ indicates altitude. The corresponding parameters $d_1$ and $p_1$ are given by:
    \begin{equation}\label{eq:d1p1}\begin{split}
        d_1 =& \max(294.05\log_{10}(h) - 432.94, 18);\\
        p_1 =& 233.98\log_{10}(h) - 0.95.\\
    \end{split}
    \end{equation}
    By incorporating both \gls{los} and \gls{nlos} propagation scenarios, we obtain the composite average path loss: 
    \begin{equation}\begin{split}
        \eta =& (4.32 - 0.76\log_{10}(h))(1-P_\text{los}) \\
        &+  (2.225 - 0.05\log_{10}(h))P_\text{los}.
    \end{split}
   \end{equation}
    Above formulation enables spatial categorization based on channel dominance. Region A1 occurs when $d_{\text{2D}} \leq d_1$, where \gls{los} propagation prevails. Beyond this threshold $d_{\text{2D}} > d_1$, region A2 emerges with probabilistic \gls{los}/\gls{nlos} behavior governed by range and elevation parameters. From this channel model, we derived analytical expressions for the average path loss exponent $\eta$ in Eq.~(\ref{eq:eta1}), along with its first-order and second-order derivatives over $d_\text{2d}$ in following equations.
   \begin{align}\label{eq:eta1}
   \eta =
   \begin{cases}\begin{split}
     2.225& - 0.05\log_{10}h,\quad \text{if in A1},\\
    (4.32 &- 0.76\log_{10}(h))(1-P_\text{los}) \\
        &+  (2.225 - 0.05\log_{10}(h))P_\text{los},\quad \text{if in A2}.
     \end{split}
   \end{cases}
   \end{align}
   \begin{align}\label{eq:etaderia}
   \eta' =
   \begin{cases}\begin{split}
     &0,\quad \text{if in A1},\\
     &\bigg(\underbrace{0.71\log_{10}(h) - 2.07}_{<0}\bigg) \Bigg(\Bigg[\underbrace{-\frac{d_1}{d^2_{\text{2D}}}- 
      \frac{1 - \frac{d_1}{d_{\text{2D}}}}{p_1} }_{<0}\Bigg]\\
     & \quad\quad\underbrace{\exp\left(-\frac{d_{\text{2D}}}{p_1}\right)}_{>0} - \underbrace{\frac{d_1}{d^2_{\text{2D}}}}_{>0}\Bigg)
   \quad \text{if in A2}.
     \end{split}
   \end{cases}
   \end{align}
   
    \begin{align}\label{eq:etasecderia}
   \eta'' =
   \begin{cases}\begin{split}
    &0,\quad \text{if in A1},\\  
    &\bigg(\underbrace{0.71\log_{10}(h) - 2.07}_{<0}\bigg)\bigg(\bigg[\underbrace{\frac{2d_1}{d_{\text{2D}}^3} + \frac{d_{\text{2D}}-d_1}{p^2_1 d_{\text{2D}}}\bigg]}_{>0} \\
    &\quad\quad\underbrace{\exp\left(-\frac{d_{\text{2D}}}{p_1}\right)}_{>0} + \underbrace{\frac{2d_1}{d_{\text{2D}}^3}}_{>0}\bigg) \quad \text{if in A2}.   
     \end{split}
   \end{cases}
   \end{align}
   
   Operational altitude constraints dictate \gls{uav} flight parameters: minimum elevation requirements prevent urban collision hazards, while maximum ceiling limits ensure regulatory compliance \cite{bmvd2021drones}. We establish the boundary: $h \in [20,120]$. Within this boundary, the inequality $0.71\log_{10}(h) - 2.07 < 0$ is satisfied, ensuring $\eta' \geq 0$ and $\eta'' \leq 0$ simultaneously. This mathematical relationship confirms that $\eta$ exhibits monotonic growth with progressively decreasing incremental rates. Fig.~\ref{fig:a2gchannel} illustrates the corresponding averaged $\eta$ values across the operational envelope. The \gls{a2g} framework from \cite{3gpp-tr36.777} maintains validity exclusively above $\SI{22.5}{\meter}$ altitude. Propagation models from \cite{3gpp.38.901} are specified for lower elevations. Nevertheless, we adopt a simplified approach by extrapolating the \gls{a2g} characterization to $\SI{20}{\meter}$ altitude for analytical consistency.
  \begin{figure}[!t]
    \centering
    %\begin{subfigure}[b]{0.325\textwidth}
        \centering
        \includegraphics[width=0.62\linewidth]{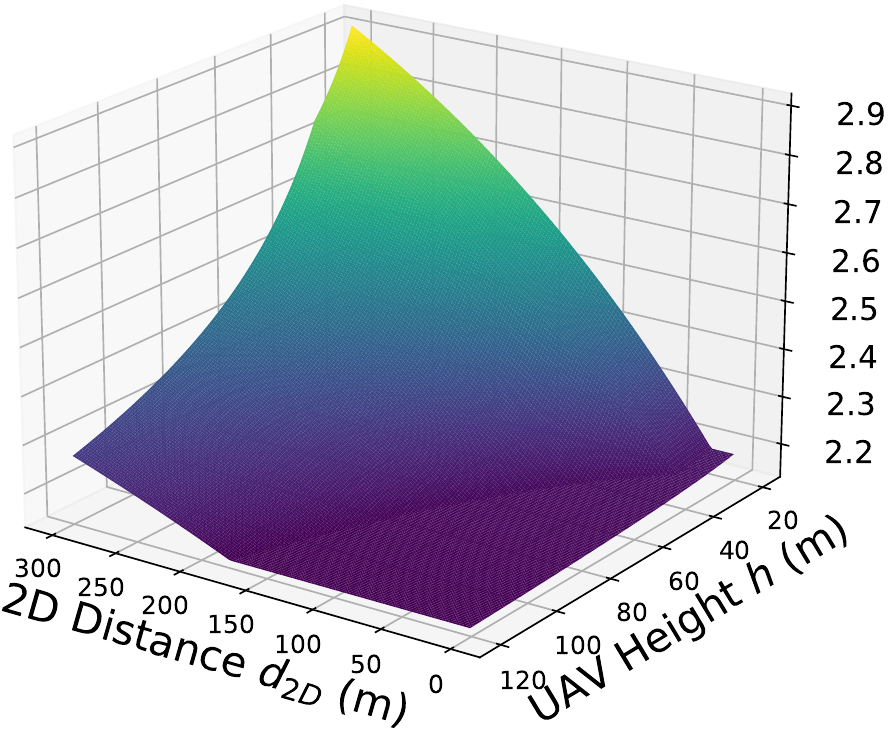}
    %\end{subfigure}
    %\hfill % This ensures proper spacing between subfigures
    % \begin{subfigure}[b]{0.325\textwidth}
    %     \centering
    %     \includegraphics[width=\linewidth]{smooth_surfaceA1.pdf}
    %     \caption{$\eta$ in A1}
    %     \label{fig:randomdep}
    % \end{subfigure}
    % \hfill
    % \begin{subfigure}[b]{0.325\textwidth}
    %     \centering
    %     \includegraphics[width=\linewidth]{smooth_surfaceA2.pdf}
    %     \caption{$\eta$ in A2}
    %     \label{fig:coordinateddep}
    % \end{subfigure}
    \caption{Path loss component $\eta$ with respect to $d_{\text{2D}}$ and $h$ (A1 is the black plane and A2 is the curved surface)}
    \label{fig:a2gchannel}
\end{figure}
	\section{Optimal node selection for localization}\label{sec:sys_model}
    \subsection{Objective equivalence}
    The \gls{crb} for \gls{3d} localization can be expressed by the modeled distance error $\sigma_{M}$ and the number of reference nodes $N$, with $\sigma^2_{\theta} \ge \frac{6\sigma^2_{M}}{N}$. Which assumes uniform distribution of all reference nodes around the \gls{uav}, with numerical validation provided in \cite{GD2025fang}. While $\sigma_{M}$ encompasses both position error variance $\sigma_p$ and distance error variance $\sigma_d$, we can simplify the analysis by treating static references as reliable. Despite neglecting $\sigma_p$, $\sigma_{M}$ remains mathematically intractable. Therefore, we adopt a tractable approximation:
    \begin{equation}\label{eq:loccrb2}
    {\sigma}^2_{\theta} \propto \frac{1}{N}\big(\frac{\sum_{n=1}^N\sigma_{d}^n}{N}\big)^2,
   \end{equation}
    where ${\sigma}_{n,d}$ represents the variance of distance estimation errors for each link. Research findings in \cite{Venus2020ToA, Hechen2024ToA} establish that dense multipath environments impose a fundamental limit on \gls{tdoa} measurement precision through $\sigma_d^2 \geq J_{\text{T}}^{-1}$, with $J_{\text{T}}$ denoting the Fisher information for \gls{tdoa} estimation:
    \begin{equation}\label{eq:toacrb}
    J_{\text{T}} = {2 c^{-1} 4 \pi^2 \text{SINR}\gamma \beta^2 \sin^2(\phi)}.
    \end{equation}
    In this expression, $c$ and $\beta$ represents the light velocity and bandwidth, while $\text{SINR}$ captures the \gls{sinr} affected predominantly predominantly by multipath and cross-interference. Given that orthogonal \glspl{prs} from different \glspl{gnb} are mandated by \gls{3gpp} and multipath interference in \gls{a2g} is generally minimal, we therefore adopt the model $\text{SINR} \approx \text{SNR}$. The terms $\gamma$ and $\sin^2(\phi)$ capture whitening effectiveness and information degradation from path loss parameter uncertainty, respectively. Studies in \cite{WGKlaus2016} reveal that although both quantities depend on bandwidth $\beta$, the sensitivity of $\sin^2(\phi)$ proves substantially lower than $\gamma$. This observation allows us to identify $\text{SNR}$, $\beta$, and $\gamma$ as the dominant factors governing $J_{\text{T}}$. The whitening effectiveness $\gamma$ follows:

    \begin{equation}\nonumber
    \gamma = \frac{\text{SNR}_\text{w}}{\text{SNR}} = \frac{P_{\text{wpre}}}{P_{\text{wpost}}},
    \end{equation}
    \begin{equation}\nonumber
    P_{\text{wpre}} = \int_{-\frac{\beta}{2}}^{\frac{\beta}{2}} s_n(f) df; \quad P_{\text{wpost}} = \text{S}_n^w\beta.
    \end{equation}
    Here, $P_{\text{wpre}}$ and $P_{\text{wpost}}$ represent pre- and post-whitening noise power levels. The whitening operation flattens the noise spectrum to $\text{S}_n^w$. Adopting a power-law noise characterization $s_n(f) = C f^{-\alpha}$ leads to:
     \begin{equation}\nonumber\begin{split}
       P_{\text{wpre}} &= \int_{-\frac{\beta}{2}}^{\frac{\beta}{2}} C f^{-\alpha} df =2\alpha C\ln(\frac{\beta}{2}).
     \end{split} 
     \end{equation}
    Since $\alpha = 1$ is commonly observed in real-world systems, 
    \begin{equation}\nonumber
     \gamma = \frac{P_{\text{wpre}}}{P_{\text{wpost}}} = \frac{2C\ln(\frac{\beta}{2})}{\text{S}_n^w\beta}.
    \end{equation}
    For practical bandwidth values where $\ln(\beta) \gg \ln(2)$, this reduces to the approximation $\gamma \propto \ln(\beta)\beta^{-1}$. \gls{a2g} propagation genrally follows Rician statistics. Urban scenarios typically exhibit stable $K$ factors with altitude sensitivity. Integrating the whitening gain into Eq.~(\ref{eq:toacrb}), we establish that distance error variance depends primarily on two mechanisms through: $\sigma^2_{n,d} \propto(\text{SNR}_n\beta_n\ln\beta_n)^{-1}$. Practical systems typically maintain proportional relationships between transmission power and allocated bandwidth, mirrored in noise characteristics:
    \begin{equation}\label{eq:transp}
    P_n = \psi \beta_n, \quad N_n = N_0 \beta_n,
    \end{equation}
    where $\psi$ and $N_0$ denote transmission and noise power spectral densities. Due to the narrow sub-band relative to the carrier frequency, inter-channel frequency differences can be neglected. These considerations yield $\text{SNR}_n \propto d_{\text{3D},n}^{-\eta_n}$. Combined with $\sigma^2_{n} \propto d_{\text{3D},n}^{\eta_n}(\beta_n\ln\beta_n)^{-1}$, incorporation into Eq.~(\ref{eq:loccrb2}) produces:
    \begin{equation}\label{eq:locobj}
    {\sigma}^2_{\theta} \propto {N}^{-3}\left(\sum_{n=1}^N d_{\text{3D},n}^{\frac{1}{2}\eta_n}(\beta_n\ln\beta_n)^{-\frac{1}{2}}\right)^2.
   \end{equation}
   % \subsection{Communication optimization}
   % In the wireless channel, the communication capacity is usually characterized with channel capacity $C$, with
   % \begin{equation}
   % C_n = \beta_n \log_2\left(1 + \text{SNR}_n\right)
   % \end{equation}
   % Substituting the proportion $\text{SNR}_n \propto d_n^{-2}$ into this equation,
   % \begin{equation}
   % C_n \propto \beta_n \log_2\left(1 + \frac{1}{d_n^2}\right).
   % \end{equation}
   % Therefore, the object function of maximizing the the overall channel capacity can be written as,
   % \begin{equation}
   % \sum_{n=1}^N C_n \propto \sum_{n=1}^N\beta_n \log_2\left(1 + \frac{1}{d_n^2}\right).
   % \end{equation}
   % \begin{equation}\label{eq:locobj1}\begin{split}
   %      d_1 =& \max(294.05\log_{10}(h) - 432.94, 18)\\
   %      p_1 =& 233.98\log_{10}(h) - 0.95\\
   %  \end{split}
   % \end{equation}
   % \begin{equation}\label{eq:pathloss1}
   %      P_\text{los}=  \begin{cases}
   %     1 \quad ,d_{\text{2D}} \leq d_1 ;\\
   %     \left(1-\frac{d_1}{d_{\text{2D}}}\right)\exp{\left(\frac{-d_{\text{2D}}}{p_1}\right)}+ \frac{d_1}{d_{\text{2D}}} \quad ,d_{\text{2D}} > d_1.
   %      \end{cases}
   %  \end{equation}
   %  \begin{equation}\begin{split}
   %      \eta =& (43.2 - 7.6\log_{10}(h))(1-P_\text{los}) \\
   %      &+  (22.5 - 0.5\log_{10}(h))P_\text{los}
   %  \end{split}
   % \end{equation}
  
   \subsection{Reference deployment}\label{subsec:deploy}
   When an \gls{uav} is flying above a dense urban area, it is reasonable to assume that \glspl{gnb} are available in multiple directions due to the high density of infrastructure. To facilitate optimization and analytical tractability, we model the horizontal distance to the $N_\text{th}$ nearest reference node, denoted as $d_{\text{2D}}(N)$, using a monotonically increasing function. For a hexagonal grid topology for reference deployment, 
  $d_{\text{2D}}(N)$ is:
    \begin{equation}\label{eq:Dnaps}
    d_{\text{2D}} (N)
    \begin{cases}
    =\Delta, 
    \quad \text{if } N = 1, \\
    \in [R - \Delta, R + \Delta],
    \quad \text{if } 1 < N \leq 7, \\
    \vdots  \\
    \in [kR - \Delta, kR + \Delta], \\
    \quad \text{if } 3k^2 - 3k + 1 < N \leq 3k^2 + 3k + 1,
   \end{cases}
   \end{equation}
    where $\Delta$ represents the distance to the nearest reference node, $R$ defines the reference node coverage radius, with $\Delta\in [0,R/2]$. The parameter $k$ indicates the hexagonal grid layer index. Although $d_{\text{2D}}(N)$ exhibits piecewise behavior, its segmental derivatives can be characterized as:
    \begin{equation}\label{eq:Dnaps2}
    d'_{\text{2D}}(N)  
    \begin{cases}
    =\Delta, & N = 1 ,\\
    \approx\frac{\Delta}{3}, &1 < N \leq 7,\\
    \vdots\\
    \approx\frac{\Delta}{3k}, &3k^2 - 3k + 1 < N \leq 3k^2 + 3k + 1.
    \end{cases}
    \end{equation}
    Statically, we take $d'_{\text{2D}}(N)>0$ while $d''_{\text{2D}}(N) \approx 0$ across most operational scenarios. 

   \subsection{Localization optimization}
   Multi-\gls{uav} deployments utilizing shared terrestrial infrastructure typically employ predetermined sub-band allocations to facilitate spectrum partitioning and mitigate co-channel interference. Such fixed bandwidth assignment strategies streamline radio resource management at \glspl{gnb}. Under this operational framework, we examine localization accuracy assuming constant bandwidth allocation $\beta_n$ for the $N_\text{th}$ \gls{uav}. This constraint allows Eq.~(\ref{eq:locobj}) to be reduced to:

   \begin{equation}\label{eq:locobj1}
    {\sigma}^2_{\theta} \propto {N}^{-3}\left(\sum_{n=1}^N d_{\text{3D},n}^{\frac{1}{2}\eta_n} \right)^2.
   \end{equation}
   We then formulate the following: 
    \begin{equation}\label{eq:locobj2}
   \sum_{n=1}^N d_{\text{3D},n}^{\frac{1}{2}\eta_n} = \sum_{n=1}^N (d_{\text{2D},n}^2 + h^2)^{\frac{1}{4}\eta(d_{\text{2D},n}, h)},
   \end{equation}
   \begin{equation}\label{eq:capphid2dh}
     \Phi(N) =  \sum_{n=1}^N (d_{\text{2D},n}^2 + h^2)^{\frac{1}{4}\eta(d_{\text{2D},n}, h)} ,
   \end{equation}
   \begin{equation}\label{eq:phid2dh}
     \phi(N) = (d_{\text{2D},n}^2 + h^2)^{\frac{1}{4}\eta(d_{\text{2D},n}, h)}.
   \end{equation}
   Building upon Eq.~(\ref{eq:locobj1}), the corresponding optimization framework becomes:
   \begin{equation} \label{eq:opt_problem}
  \begin{aligned}
   \underset{{N}}{\text{min}} \quad &  F_\theta(N) = \Phi^2(N)N^{-3} \\
   \text{s. t.} :\quad & N \in \mathbb{Z}_{>1}, \\
                       & h \in \mathbb{R}^{+},\\
                       & d_{\text{2D},n} \in \mathbb{R}^{+}, \quad n \in \mathcal{N}.
   \end{aligned}
   \end{equation}
  From our prior work \cite{fang2025lightweight}, $F_\theta(N)$ admits an optimal solution $N_\text{opt}$ under the statistical \gls{a2g} channel model and hexagonal deployment (Subsecs.~\ref{subsec:a2gchannel}, \ref{subsec:deploy}). Due to space limitations, the complete proof is provided in \cite{fang2025lightweight}.
  
  With known $d_{\text{2D}} (N)$ and latitude $h$ precisely, a general optimal $N_\text{opt}$ can be obtained. However, that is practically infeasible, meanhwile, a general optimal $N_\text{opt}$ cannot guarantee optimized performance for each individual case where \gls{los} conditions vary. Therefore, an adaptive yet lightweight approach for node selection is urgently needed in the current \gls{3gpp} localization framework.
   
   \subsection{Lightweight node selection method}\label{sec:method}
   Although \gls{3gpp} defines procedures for \gls{dlotdoa}, node selection strategies are not specified. The \gls{lmf} can select the nearest \glspl{gnb} using approximate \gls{gnss} position information—termed \gls{lmf}-coordinated node selection. Alternatively, the \gls{ue} can select nodes during handover based on \gls{rsrp} and inform the \gls{lmf} via \gls{lpp}—termed \gls{ue}-assisted node selection. We analyze both strategies below.
   \subsubsection{\gls{lmf} coordinated node selection} First, we can rewrite Eq.~\ref{eq:phid2dh} in the form as $\phi(N) = (d_{\text{3D},n})^{\frac{1}{2}\eta_{n}}$, and $\eta_n$ subjects to a two-point distribution with probability determined by $d_{\text{3D},n}$:
    \begin{equation}\label{eq:Berno}
        \eta_n =\begin{cases}4.32 - 0.76 \log(h), \quad \text{w.p. } 1-P_\text{los},\\
        0.225-0.05\log(h), \quad \text{w.p. } P_\text{los}.\end{cases}
    \end{equation}
   This approach sorts the nodes based on inaccurate distance estimates. While ignoring the \gls{gnss} error, we formulate the selection rank sequence with a form of $\phi(N)$,
   \begin{equation}\label{eq:rlselection}
        \textbf{r}_\text{L} = \operatorname{sort}(\textbf{d}_\text{3D})^{\odot\frac{1}{2}\bm{\eta}},
    \end{equation}
   where $\textbf{d}_\text{3D}$ and $\bm{\eta}$ denote the distance vector and path-loss exponent vector, respectively. $\textbf{r}_\text{L}$ is stochastically monotonically increasing, with:
   \begin{equation}\label{eq:rlmono}
   \mathbf{r}_\mathrm{L}^{(1)} \leq_{\mathrm{st}} \mathbf{r}_\mathrm{L}^{(2)} \leq_{\mathrm{st}} \cdots \leq_{\mathrm{st}} \mathbf{r}_\mathrm{L}^{(K)}.
   \end{equation}
   \subsubsection{\gls{ue} assisted node selection} Taking into account that $\textbf{d}_\text{3D}$ and $\bm{\eta}$ are identical, the rank sequence based on \gls{rssi} can be described by, 
   
    \begin{equation}\label{eq:ruselection}
        \textbf{r}_\text{U} = \operatorname{sort}(\textbf{d}_\text{3D}^{\odot\frac{1}{2}\bm{\eta}}).
    \end{equation}
   For the first element of $\textbf{r}_\text{L}$, since $\eta_n$ subjects to a two-point distribution, we have,
  \begin{equation}\label{eq:rlprob}
        P(\textbf{r}_\text{L}^{(1)}\leq \overline{\textbf{r}}_\text{L}^{(1)}) = P_\text{LOS}.
    \end{equation}
    where $\overline{\textbf{r}}_\text{L}^{(1)}$ is the averaged value resulting from variation of $\eta_n$. Meanwhile for $\textbf{r}_\text{U}$, we have,
   \begin{equation}\label{eq:ruprob}\begin{split}
    &\forall \textbf{r}_\text{L}^{(k)}, P(\textbf{r}_\text{L}^{(k)}\leq \overline{\textbf{r}}_\text{L}^{(1)}) > 0;\\
       &P(\textbf{r}_\text{U}^{(1)}\leq \overline{\textbf{r}}_\text{L}^{(1)}) = P(\textbf{r}_\text{L}^{(1)}\leq \overline{\textbf{r}}_\text{L}^{(1)}) + P(\textbf{r}_\text{L}^{(2)}\leq \overline{\textbf{r}}_\text{L}^{(1)})+...
   \end{split}
    \end{equation}
    Applying the definition of usual stochastic order \cite{shaked2007stochastic}, for random $\overline{\mathbf{r}}_\mathrm{L}^{(1)} \in \mathbb{R}$, it has:
    \begin{equation}\label{eq:stoor}
    \mathbf{r}_\mathrm{U}^{(1)} \leq_{\mathrm{st}} \mathbf{r}_\mathrm{L}^{(1)} \iff 
    P\big(\mathbf{r}_\mathrm{U}^{(1)} \leq \overline{\mathbf{r}}_\mathrm{L}^{(1)}\big) 
    \geq 
    P\big(\mathbf{r}_\mathrm{L}^{(1)} \leq \overline{\mathbf{r}}_\mathrm{L}^{(1)}\big)
    .
    \end{equation}
    $\mathbf{r}_\mathrm{U}^{(1)}$ is stochastically smaller than $\mathbf{r}_\mathrm{L}^{(1)}$. Similarly, for the last element, $\mathbf{r}_\mathrm{U}^{(K)} \geq_{\mathrm{st}} \mathbf{r}_\mathrm{L}^{(K)}$ holds. Given the boundary conditions and the stochastic monotonicity of both sequences, there exists an index $k_\text{e}$ where the stochastic ordering relationship transitions, i.e., $r_\mathrm{U}^{(k_\text{e})} \approx_{\mathrm{st}} r_\mathrm{L}^{(k_\text{e})}$. Therefore, the error terms from both selection strategies, $\phi_\text{L}(N)$ and $\phi_\text{U}(N)$, follow the same trend:  
    \begin{align}
    \phi_\mathrm{U}(N) &\leq_\mathrm{st}  \phi_\mathrm{L}(N), \quad \text{for } 1 \leq N \leq k_\mathrm{e}, \nonumber \\
    \phi_\mathrm{U}(N) &\geq_\mathrm{st}  \phi_\mathrm{L}(N), \quad \text{for } k_\mathrm{e} \leq N \leq K; \label{eq:phi_bound}
    \end{align}
    Consequently, there exists a point $K_\text{e}$ (where $K_\text{e} > k_\text{e}$) such that for the cumulative error functions $\Phi_\text{L}(N)$ and $\Phi_\text{U}(N)$: 
    \begin{align}
    \Phi_\mathrm{U}(N) &\leq_\mathrm{st} \Phi_\mathrm{L}(N), \quad \text{for } 1 \leq N \leq  K_\mathrm{e}, \nonumber \\
    \Phi_\mathrm{U}(N) &\geq_\mathrm{st} \Phi_\mathrm{L}(N), \quad \text{for } K_\mathrm{e} \leq N \leq K. \label{eq:Phi_bound}
    \end{align}
    Referring Eq.~(\ref{eq:opt_problem}), we can conclude that, in case of a resource-limited scenario, where $N_\text{max} \leq K_\text{e}$ \gls{ue} assisted node selection performs better. And in the a scenario where resource is sufficient  \gls{lmf} coordinated node selection performs better. 
    
    In practice, total bandwidth supporting \gls{dlotdoa} is limited, and resource constraints extend beyond just the \gls{uav}. We focus on solving $N_\text{opt}$ via \gls{ue}-assisted node selection, which offers: \begin{enumerate*}[label=\emph{\roman*)}]
   \item extremely lightweight operation with no processing overhead;
   \item optimal performance in low-$N$ regimes;
   \item sensitivity to \gls{los} probability mismatches, enabling adaptive compensation
   \end{enumerate*}. Algorithm~\ref{alg:rbof} presents the detailed implementation. Lines 12-17 compensate for path-loss variation. In dense urban areas, $\mathbf{T}_2$ (${T}_{2,n} = 0$ is the break point where the localization error increases with $N$, details can be found in \cite{fang2025lightweight}) receives a positive offset, forcing $N_\text{opt}$ smaller, so we apply negative feedback to dynamically enlarge $N_\text{opt}$. Following optimal node selection, \gls{tdoa} measurements from selected nodes feed into the gradient-based localization algorithm from \cite{GD2025fang} (Alg.~\ref{alg:MAGD}). This algorithm outperforms conventional methods, particularly when reference nodes have limited altitude variation. It fuses prior \gls{uav} position from previous estimates or \gls{gnss} to improve accuracy even when $N<3$, unlike conventional \gls{3d} methods requiring at least three reference nodes. %Due to length constraints, the detailed algorithm implementation is not presented here, and readers are referred to \cite{GD2025fang} for the complete algorithm description.
    \begin{algorithm}[!t]
    \caption{\gls{rof}}
    \label{alg:rbof}
    \scriptsize
    \DontPrintSemicolon
    Input: estimated distances based on \gls{rssi} $d^R_{n}$; imperfect altitude $h$; averaged path loss component consolidated as a table $\overline\eta(d_\text{2D}, h)$; maximum reference numbers $N_m$ \\
    %Output: estimated $p$ \\
    \SetKwProg{Fn}{Function}{ :}{end}
    \Fn{}{ 
        sort $d^R_{n}$; \\
        \For {$n = 1:N_m$ }{
        $d_{\text{2D},n} = \sqrt{(d^R_{n})^2 - h^2}$\\
        find $\overline\eta$ regarding $d_{\text{2D},n}$ and $h$\\
        compute $\phi_n$ referring Eq.~(\ref{eq:phid2dh})
        }
        compute $\Phi_n$ referring Eq.~(\ref{eq:capphid2dh})\\
        compute $T_{2,n}$ in \cite{fang2025lightweight}\\
        $\textbf{T}_2 \gets \{T_{2,n}; n \in [1, N_m]\}$\\
       $ N_\text{opt} \gets \left| \left\{n \in [1, N_m]  \;:\; \textbf{T}_2[n] < 0 \right\} \right|$ \tcp{Count the number of negative values}
       \If{$\overline{\textbf{T}}_2\geq 0$\tcp{check the averged value}}{$\textbf{T}^\circ_2= \textbf{T}_2- \sqrt{\overline {\textbf{T}}_2}$}
       \Else{$\textbf{T}^\circ_2= \textbf{T}_2+ \sqrt{|\overline {\textbf{T}}_2|}$\tcp{compensate for $\eta$ miss-match}}
       $ N^\circ_\text{opt} \gets \left| \left\{n \in [1, N_m]  \;:\; \textbf{T}^\circ_2[n] < 0 \right\} \right|$\\
       $N_\text{opt} = \operatorname{round}(\frac{N_\text{opt}}{2} + \frac{N^\circ_\text{opt}}{2})$ \\
       $N_\text{opt} = \min(N_m,\max(N_\text{opt},3))$ \tcp{set upper and lower boundaries}
       }
    \end{algorithm}
    \section{Threat analysis of \gls{3gpp}\label{sec:prsthreatmodel} \gls{otdoa}-based localization}
  In the \gls{3gpp} localization framework, \gls{lpp} and \gls{nrppa} messages are secured through underlying network security mechanisms. \gls{lpp} is protected by \gls{nas}-level security, while \gls{nrppa} signaling is secured via \gls{tls} on the \gls{ngc} interface \cite{3gpp.38.401,3gpp.33.501}. However, \glspl{prs} lack encryption or authentication. These periodically broadcast downlink signals with limited bandwidth and power are inherently vulnerable to spoofing. A malicious actor can effectively spoof such narrowband signals. Since \glspl{prs} are received by multiple \glspl{ue} within a coverage area, spoofing attacks can significantly degrade localization accuracy or availability for all affected users, disrupting services at scale. We therefore model the \gls{prs} spoofing attack in the following subsection.
   \begin{algorithm}[!t]
    \caption{\gls{gd} algorithm}
    \label{alg:MAGD}
    \scriptsize
    \DontPrintSemicolon
     Input: Selected nodes set $\mathcal{U}$; initial \gls{gnss} position as $\hat{\bm{p}}$; learning rate $\alpha$ and discount factor $\beta$; momentum $m$; \gls{tdoa} distances $\hat{d}^n$; position of \gls{gnb} $\bm{p}^n$; maximum iteration $I$, convergence threshold $\theta_t$\\
    \For{$i = 1:I$}{    
        {    \For {$n = 1:N$}{ 
                $\tilde{\bm{d}}^n = \Vert \hat{\bm{p}} - \bm{p}^n\Vert$ }     
             $G^i \gets \sum\limits_{n\in\mathcal{U}}\frac{(\hat{\bm{p}} - \bm{p}^n) }{\hat{d}_n}\cdot(\hat{d}^n - \tilde{d}^n)$ \tcp*{Gradient}
             $D^i \gets \sum\limits_{n\in\mathcal{U}}(\hat{d}^n - \tilde{d}^n)$ \tcp*{Distance differences}
             update: $\hat{\bm{p}} \gets \hat{\bm{p}} + m\cdot \hat{\bm{p}} + \frac{{\alpha}}{N}\cdot \frac{G^i}{\Vert G^i \Vert} $; }\\
             \If{$D^i > D^{i-1}$}{$\alpha = \beta\cdot\alpha$}
             \If{$(D^i - D^{i-1})/D^i< \theta_t$}{break}}
   \end{algorithm}
   \subsection{\gls{tdoa} observation model}
   First, \gls{tdoa} measurements are based on correlating the \glspl{prs}, which are typically short pulses with good correlation properties. To ensure localization accuracy, \glspl{ue} are required to perform \gls{tdoa} measurement multiple times. We consolidate the true \gls{tdoa} values from all $N$ nodes and $K$ measurement rounds into a matrix $\bm{\tau}$:
   \begin{equation}\nonumber
   \bm{\tau} =
   \begin{bmatrix}
   \tau_1^1  & \tau_1^2 &\cdots &\tau_1^n & \cdots &\tau_1^N\\
   \tau_2^1 & \tau_2^2 &\cdots &  \tau_2^n& \cdots &\tau_2^N\\
  \vdots&\vdots&\ddots&\vdots&\ddots&\vdots\\
   \tau_k^1 & \tau_k^2 &\cdots &  \tau_k^n& \cdots &\tau_k^N\\
   \vdots&\vdots&\ddots&\vdots&\ddots&\vdots\\
   \tau_K^1 & \tau_K^2 &\cdots &  \tau_K^n& \cdots &\tau_K^N
   \end{bmatrix}.
   \end{equation}
   \gls{prs} transmissions are typically regulated with a fixed time interval $\Delta_t$. Given that clock drifts of \glspl{gnb} are generally very small according to \gls{3gpp} requirements, the node clock offsets $\bm{\delta} = [\delta_1, \delta_2, \ldots, \delta_N]$ remain approximately constant. The solved \gls{tdoa} matrix is given by:
   \begin{equation}
    \bm{\tau}_R = \bm{\tau} + \bm{\delta} \otimes \mathbf{1}^{K \times 1} + \bm{\epsilon}_D,
   \end{equation}
   where $\mathbf{1}^{K \times 1}$ is a $K$-dimensional column vector of ones. $\bm{\epsilon}_D \in \mathbb{R}^{K \times N}$ represent the random transmission delays due to wireless channel, in which $\epsilon_j^n\sim\mathcal{N}(0,\sigma^n)$. However, under potential spoofing attacks targeting the \glspl{prs}, some entries of $\bm{\tau}_R$ may be spoofed. The operator under spoofing can be defined as, 
   \begin{equation}
  \bm{\tau}_o(k,n) = \mathcal{S} [\bm{\tau}_R(k,n)],
  \end{equation}
   \begin{equation}
  \mathcal{S} [\bm{\tau}_R(k,n)] = 
  \begin{cases}
  \bm{\tau}_S(k,n), & \text{w.p. } P_S, \quad k \in \bm{S},\\
  \bm{\tau}_R(k,n), & \text{w.p. } 1-P_S, \quad k \in \bm{S},\\
  \bm{\tau}_R(k,n), & \text{otherwise}.
  \end{cases}
  \end{equation}
   where $\bm{S}$ represents the set of spoofed links, $P_S$ is the probability of a successful spoof attack for $k \in \bm{J}$, and $\bm{\tau}_S(k,n)$ indicates the spoofed \gls{tdoa}.

   % Although the current \gls{3gpp} standard does not explicitly mandate quality thresholds or retransmission requirements for \gls{tdoa} measurements, jammed measurements can lead to reduced localization accuracy. In practical implementations, this often triggers re-measurement requests from the \gls{lmf}. As a result, the overall localization process may suffer from increased latency.
   \subsection{\gls{tdoa} spoofing model}\label{subsec: tdoasp}
   In \gls{tdoa} processing, the \gls{ue} reports the first significant correlation peak corresponding to the \gls{los} path; when multiple peaks exist, multiple \gls{tdoa} measurements can be provided. A spoofer using a single high-power pulse faces substantial risk: the genuine pulse remains detectable, and once the \gls{uav} identifies abnormal amplitude, extraction can preserve the authentic pulse. Since the real pulse's power and arrival time are unknown, guaranteeing success requires impractically high power. Consequently, single high-power pulses can be readily filtered. A sophisticated spoofer might instead employ multiple lower-power pulses that merge with the authentic pulse into a composite peak, masking genuine timing. However, since leading-edge detection is preferred in \gls{a2g} channels, spoofing pulses must arrive earlier and maintain sufficient power to dominate the merged peak's leading edge. Therefore, successful spoofing from a rogue \gls{uav} is: \begin{enumerate*}[label=\emph{\roman*)}]
    \item \textbf{power-limited}—insufficient power fails to make the spoofed leading edge detectable;
    \item \textbf{synchronization-limited}—imperfect synchronization, unknown propagation delay, and other factors create timing uncertainty.
    \end{enumerate*}
   % \begin{figure}[!ht]
   %  \centering
   %   \begin{subfigure}[b]{0.34\textwidth}
   %      \centering
   %      \includegraphics[width=\textwidth]{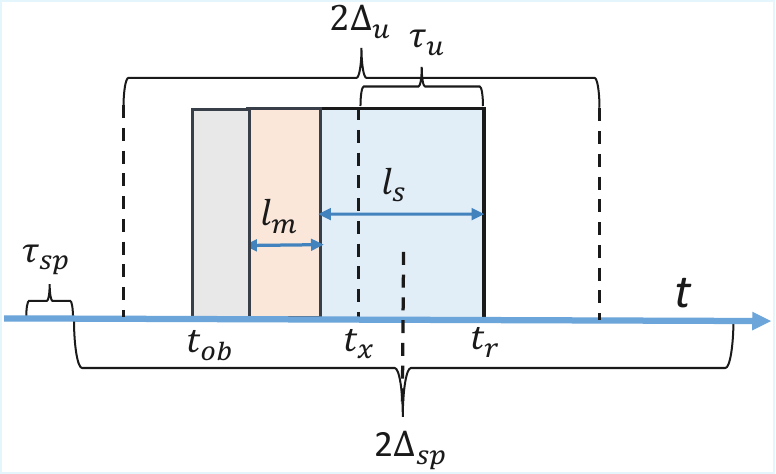}
   %      \caption{ Case A: $t_{ob}\leq t_r-l_s$}
   %  \end{subfigure}
   % \begin{subfigure}[b]{0.35\textwidth}
   %      \centering
   %      \includegraphics[width=\textwidth]{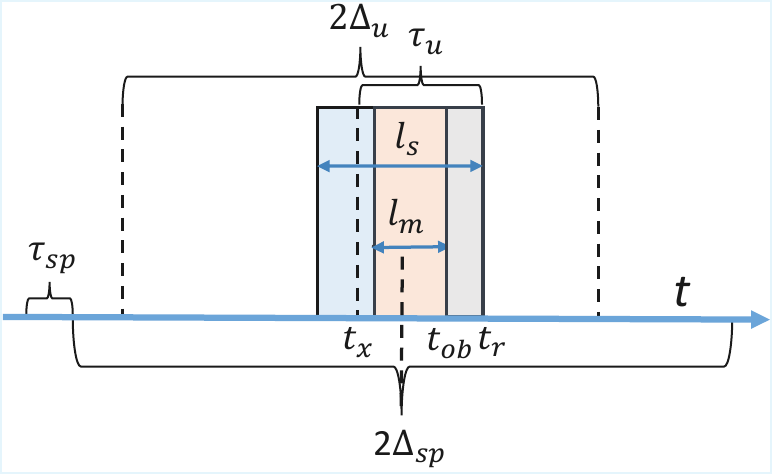}
   %      \caption{Case B: $t_r-l_s\leq t_{ob}\leq t_r$}
   %  \end{subfigure}
   %  \begin{subfigure}[b]{0.35\textwidth}
   %      \centering
   %      \includegraphics[width=\textwidth]{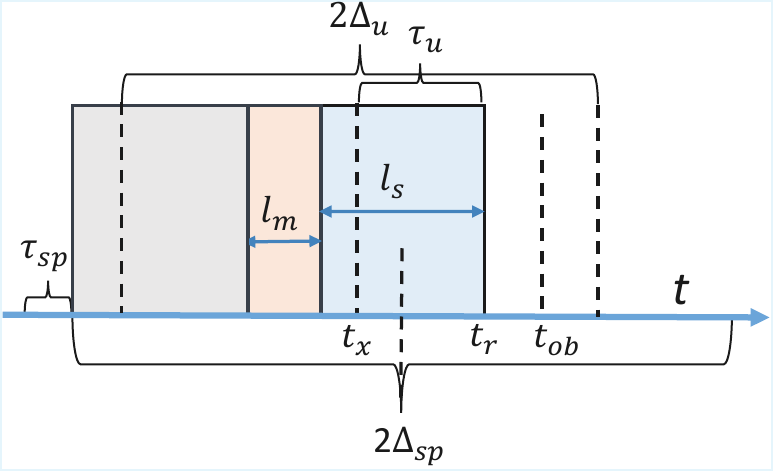}
   %      \caption{Case C: $t_r\leq t_{ob}$}
   %  \end{subfigure}
   %  \caption{Spoofing pulse injection: the pink and blue blocks mark the spoofed correlation peak and minimum separation. $t_{ob}$, $t_{x}$, and $t_{r}$ denote the \gls{ue} observation time, and the authentic transmission and reception times, respectively. $\tau_u$ and $\tau_{sp}$ denote the signal travel latency of authentic pulse and spoofer pulse.} \label{fig:toaspoof}
   %  \end{figure}

  Power limitations are well-studied and demonstrated in Subsection~\ref{subsec:resiloceva}. Here we focus on synchronization limitations. The correlation peak width is $l_m = 1/\beta$, where $\beta$ is signal bandwidth. While \gls{uav} synchronization is not required for \gls{otdoa}, the \gls{uav} synchronizes with the network for communication. Synchronization quality $\pm\sigma_u$ exhibits uncertainty from oscillator drift and network configuration. Re-synchronization is triggered when uncertainty exceeds a threshold \cite{3gpp29565}. We assume $\sigma_u$ is bounded by $\Delta_u$ and the spoofer's synchronization uncertainty is bounded by $\Delta_{sp}$. $P_s$ denotes spoofing success probability, while $\tau_u$ and $\tau_{sp}$ are propagation delays. To understand how these parameters affect attack success, we establish the following relationship:

  \begin{lemma}\label{lemma:2} Under typical conditions ($\Delta_u > l_s + l_m$, $\tau_{sp} < \tau_u$), $P_s$ increases with $l_m$ and \gls{ue} $\Delta_u$, while decreasing with $\Delta_{sp}$. These trends may reverse under exceptional geometric conditions.
  \end{lemma}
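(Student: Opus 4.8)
The plan is to reduce the statement to a one-dimensional probability over the relative arrival time of the spoofed and legitimate leading edges at the victim \gls{ue}, and then read off the three monotonicities from the geometry of a success window.

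First I would set up the timing model. Let $\delta$ denote the time by which the spoofed leading edge precedes the legitimate one at the \gls{ue}. Writing the transmit instants in network time (both the spoofer and the victim are network-synchronised), $\delta$ decomposes into a deterministic geometric head start $\tau_u-\tau_{sp}>0$ (positive precisely under the stated condition $\tau_{sp}<\tau_u$), the spoofer's controllable advance, and two independent zero-mean perturbations: a spoofer-timing jitter with spread governed by $\Delta_{sp}$ and a victim synchronisation jitter with spread governed by $\Delta_u$. I would then argue that leading-edge dominance requires $\delta>0$, while inseparability of the composite peak (so the spoof is not filtered as a distinct second peak) requires $\delta<l_s+l_m$, with $l_m=1/\beta$ entering additively. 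This pins down the \emph{success window} $\delta\in(0,W)$ with $W=l_s+l_m$ and gives $P_s=\Pr[0<\delta<W]$, an overlap integral between the spoofer's reachable and plausible distribution for $\delta$ and the fixed window.

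Next I would establish the three signs from this representation. Monotonicity in $l_m$ is immediate: $W$ increases in $l_m$ and the probability of an interval is non-decreasing in its right endpoint whenever the density there is positive, so $\partial P_s/\partial l_m\ge 0$. For $\Delta_{sp}$, the head-start condition places the spoofer's aim in the interior of $(0,W)$; increasing $\Delta_{sp}$ is then a mean-preserving spread of $\delta$ about an interior point, which moves probability mass out of the bounded window, giving $\partial P_s/\partial\Delta_{sp}\le 0$. For $\Delta_u$, a larger victim uncertainty both relaxes the plausibility bound---so a slightly larger early bias still passes as ordinary synchronisation error---and, under the typical regime $\Delta_u>l_s+l_m$ where the merge constraint rather than the plausibility bound is binding, shifts the effective centre of the admissible region to the early side; either effect enlarges the overlap with $(0,W)$, so $\partial P_s/\partial\Delta_u\ge 0$.

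Finally, the reversal caveat would follow by tracking which constraint is active. The three signs rely on (i) $\tau_{sp}<\tau_u$, so the nominal offset lies inside the early window, and (ii) $\Delta_u>l_s+l_m$, so the upper edge of the window is set by the merge length rather than by plausibility. If the geometry violates (i), the aim centre falls outside $(0,W)$ and a mean-preserving spread in $\Delta_{sp}$ can instead push mass \emph{into} the window, flipping that sign; if $\Delta_u<l_s+l_m$ so that (ii) fails, increasing $l_m$ can overshoot the now-binding plausibility bound and the $l_m$ trend can reverse as well. These are exactly the ``exceptional geometric conditions'' of the statement. The main obstacle I anticipate is not the differentiation but fixing the success-probability functional itself---in particular how $\Delta_u$ should enter (through the detectability cover versus the relative-arrival jitter) and choosing jitter distributions concrete enough that every derivative sign is unambiguous under the typical conditions, while identifying cleanly the active-constraint boundaries that trigger the reversals.
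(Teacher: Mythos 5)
Your success-window reduction captures two of the three monotonicities in essentially the same spirit as the paper: the $l_m$ trend follows from widening the window, and the $\Delta_{sp}$ trend from spreading the spoofer's timing about an aim point interior to the window. The paper makes both precise with uniform jitters of half-width $\Delta_{sp}$ and $\Delta_u$, obtains closed forms, takes $P_s=\min(P_1,P_2)$ over two limiting regimes --- one width-limited, $P_1\propto (l_s+l_m)/\Delta_{sp}$, and one offset-limited, $P_2=\tfrac{1}{2}+\tfrac{\tau_u-\tau_{sp}}{2\Delta_{sp}}$ --- and differentiates each. The genuine gap is the $\Delta_u$ claim, and you flag it yourself: in your one-dimensional model the window $(0,\,l_s+l_m)$ does not depend on $\Delta_u$ at all, so you are left gesturing at a ``relaxed plausibility bound'' without a functional to differentiate. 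In the paper, $\Delta_u$ enters through the distribution of the UE's observation start time $t_{ob}$, which splits the analysis into three cases (observation begins before $t_r-l_s$, inside $[t_r-l_s,t_r]$, or after $t_r$) with \emph{different effective window widths} ($l_s+l_m$ versus $0.5l_s+l_m$); the $\Delta_u$ monotonicity is then read off from $\partial P_1/\partial\Delta_u=\alpha l_m(\tau_u-\alpha l_m)/(8\Delta_{sp}\Delta_u^{2})$. Without modelling the observation-window start as a second random variable coupled to $\Delta_u$, this dependence --- one third of the lemma --- cannot be derived.

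A second, smaller discrepancy concerns the reversal conditions. You propose $\Delta_u<l_s+l_m$ as a trigger for reversing the $l_m$ trend, but that regime is excluded by the lemma's hypothesis, so it cannot be the ``exceptional geometric condition'' intended. The paper's actual reversal for the $\Delta_u$ trend is $\tau_u\le \alpha l_m$ (a gNB so close that its propagation delay falls below the peak-separation scale, flipping the sign of $\partial P_1/\partial\Delta_u$), and for the $\Delta_{sp}$ trend it is $\tau_{sp}>\tau_u$ (spoofer farther from the victim than the legitimate gNB, flipping $\partial P_2/\partial\Delta_{sp}$). Your identification of the latter is correct; the near-gNB condition is missing.
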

   \begin{proof}
   See Appendix~\ref{sec:proof3} for the detailed proof.
   \end{proof}
   % The demonstration of minimum separation of correlation peak masking with respect to \gls{3gpp} \gls{prs} is provided in Appendix.~\ref{sec:proof2}. Meanwhile, the parametric sensitivity analysis of $\Delta_{sp}$, $\Delta_{u}$, and $l_m$ with respect to $P_S$ is presented in Appendix. .~\ref{sec:proof3}. For the case when $\Delta_{sp}<\Delta_u+\tau_u$, the corresponding analysis is omitted due to length constraints; however, it will be provided in the following numerical simulations.  
   \begin{figure}[!t]
    \centering
    \begin{subfigure}[b]{0.241\textwidth}
        \centering
        \includegraphics[width=\textwidth]{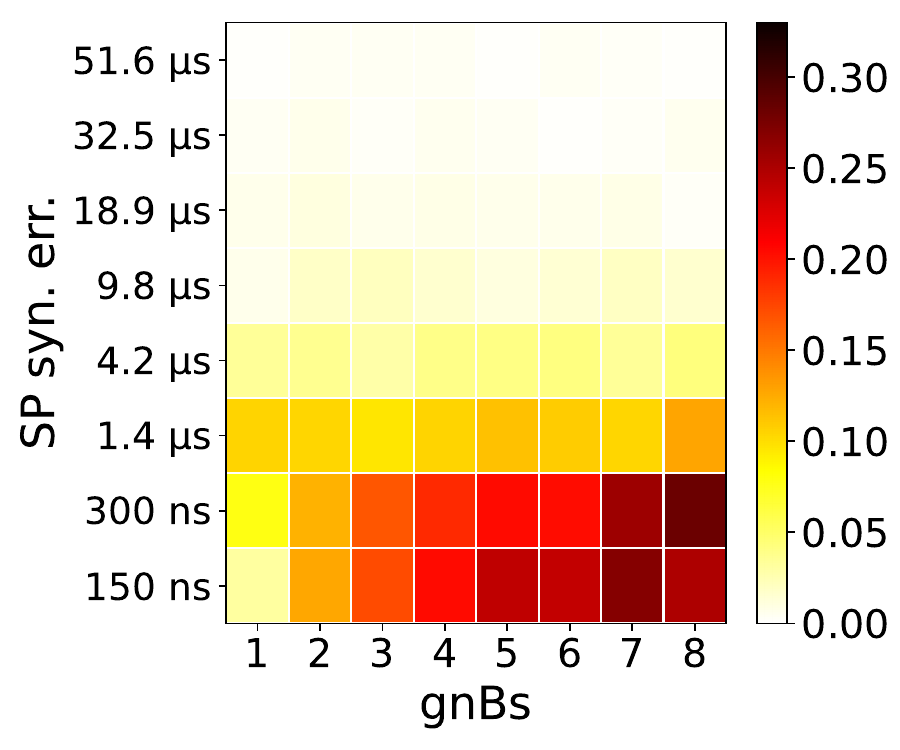}
        \caption{}\label{fig:bsPs}
    \end{subfigure}
   \begin{subfigure}[b]{0.241\textwidth}
        \centering
        \includegraphics[width=\textwidth]{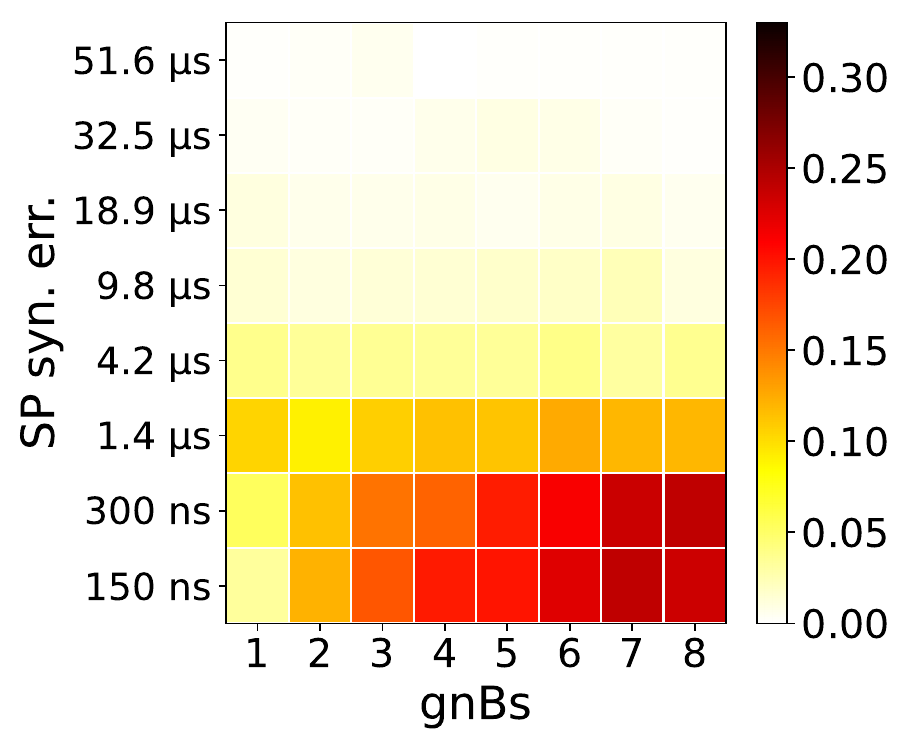}
        \caption{}\label{fig:DeltaPs}
    \end{subfigure}
   \begin{subfigure}[b]{0.241\textwidth}
        \centering
        \includegraphics[width=\textwidth]{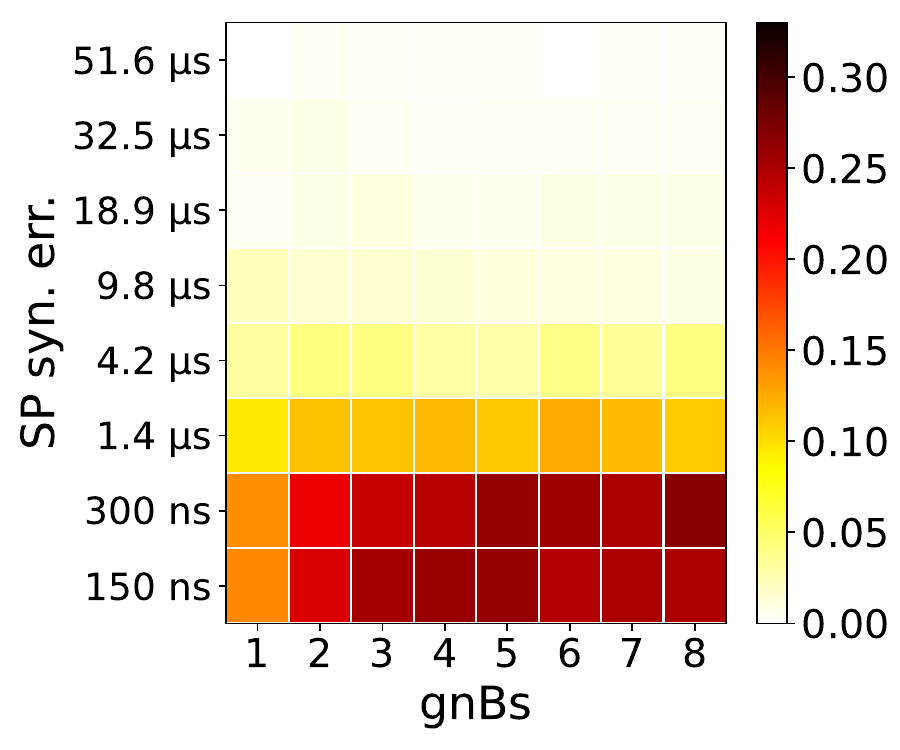}
        \caption{}\label{fig:tausPs}
    \end{subfigure}
   \begin{subfigure}[b]{0.241\textwidth}
        \centering
        \includegraphics[width=\textwidth]{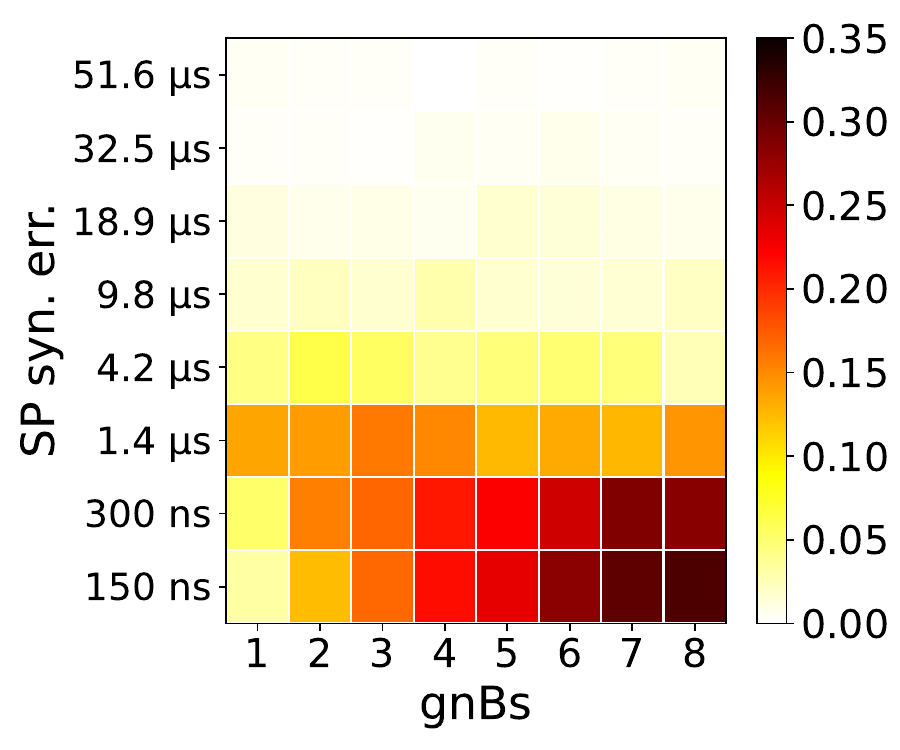}
        \caption{}\label{fig:lsPs}
    \end{subfigure}
   \caption{$P_s$ with varying parameters: (a) baseline scenario; (b) $\Delta_u = \SI{300}{\ns}$; (c) sending spoofing pulse earlier by $\SI{200}{\ns}$; (d) $l_s= \SI{200}{\ns}$.}
   \label{fig:Ps_var}
   \end{figure}   
   \subsubsection{Penetration Test} 
   Assuming the spoofer pulse is always strong enough to mask the authentic pulse, we evaluate $P_s$ through Monte Carlo simulation with $1{,}000$ iterations. The baseline parameters are $\Delta_u = \SI{1000}{\ns}$ and $l_m = \SI{50}{\ns}$. \gls{uav}-to-\gls{gnb} distances range from tens to hundreds of meters, while the spoofer is randomly positioned within $[10, 100]$ meters from the \gls{uav}. In Fig.~\ref{fig:Ps_var}, \glspl{gnb} are indexed 1 to 8 in ascending order of distance from the \gls{uav}. We set the minimum peak separation requirement to $l_s = 2l_m$, following the Rayleigh criterion for resolving closely-spaced correlation peaks \cite{kay1998fundamentals}. We vary $\Delta_u$ and $l_s$ relative to the baseline and evaluate early spoofing pulse transmission to enhance $P_s$ for nearby \glspl{gnb}.

   In general, $P_s$ benefits from tight spoofer synchronization, especially when \glspl{gnb} are distant. Comparing Figs.~\ref{fig:bsPs} and \ref{fig:DeltaPs}, improved \gls{uav} synchronization reduces $P_s$, particularly for distant \glspl{gnb}. For nearby \glspl{gnb}, this reduction is minimal, consistent with our analysis in Appendix~\ref{sec:proof3}. Comparing Figs.~\ref{fig:bsPs} and \ref{fig:tausPs}, early spoofing pulses significantly enhance $P_s$ for nearby \glspl{gnb} while reducing it for distant ones (e.g., \glspl{gnb} 7–8). In Fig.~\ref{fig:lsPs}, $P_s$ increases with larger $l_s$.

     \begin{figure}[!t]
   \begin{subfigure}[b]{0.241\textwidth}
        \centering
        \includegraphics[width=\textwidth]{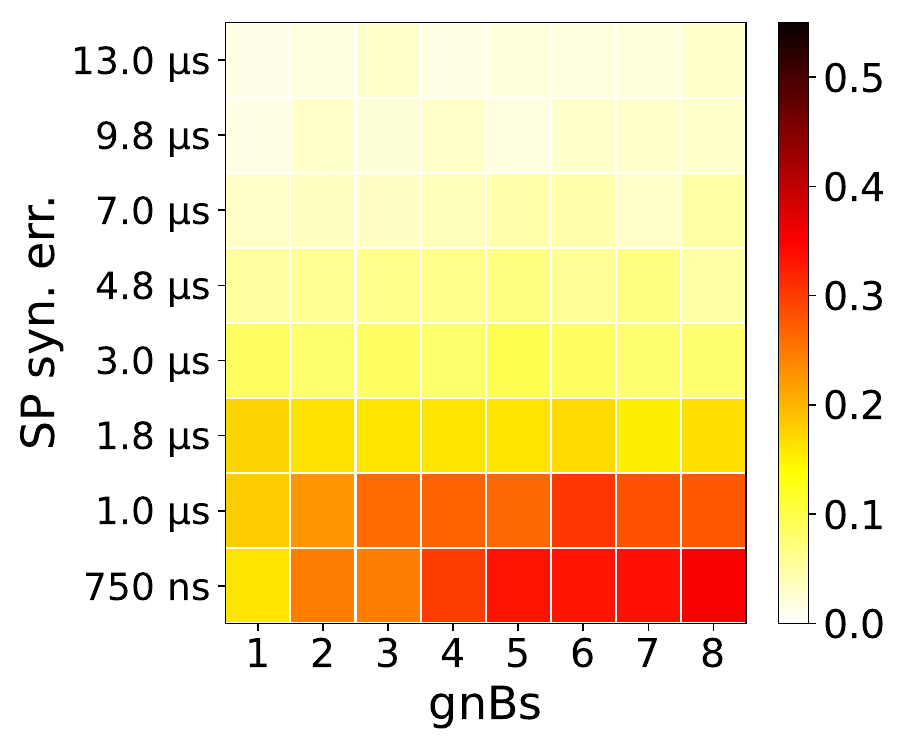}
        \caption{Pulse number $=3$}
    \end{subfigure}
   \begin{subfigure}[b]{0.241\textwidth}
        \centering
        \includegraphics[width=\textwidth]{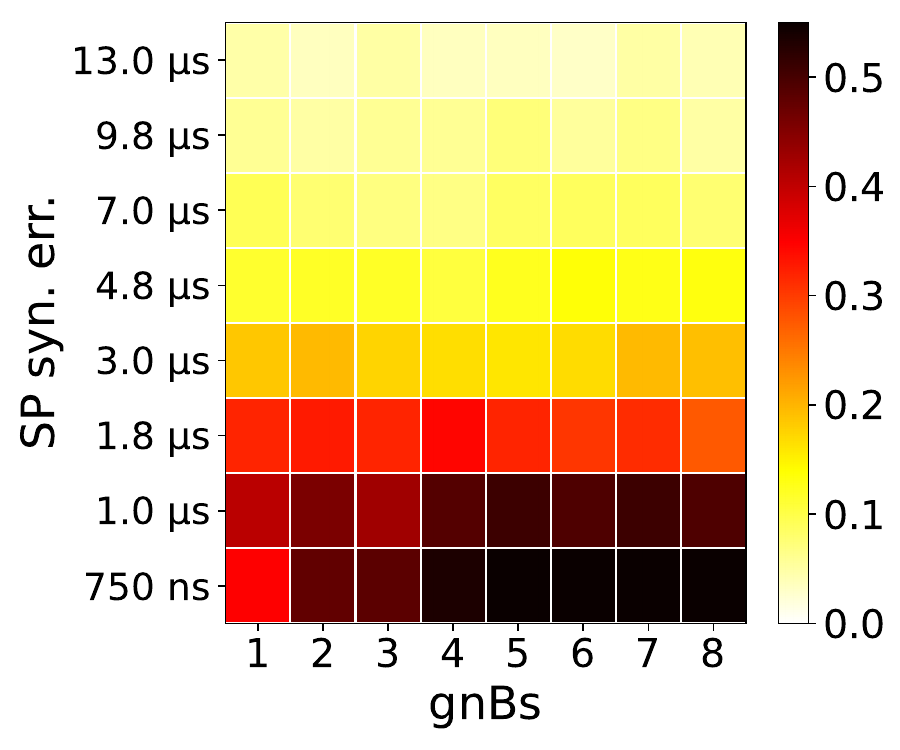}
        \caption{Pulse number $=5$}
    \end{subfigure}
   \caption{$P_s$ with increased spoofing pulses}
   \label{fig:spsvar}
   \end{figure}
   \subsubsection{Attack Strategies} 
   Achieving tight core network synchronization is challenging even for network participants (e.g., rogue \gls{uav}). An alternative multi-pulse approach is shown in Fig.~\ref{fig:spsvar}. However, spoofers can only estimate their own distances to \glspl{gnb}, not the \gls{uav}-to-\gls{gnb} distances needed for precise timing. When the \gls{uav} is close to the spoofer, the asummed distance estimates are more reliable; when distant, timing errors and higher power requirements reduce effectiveness. Therefore, spoofers primarily target nearby victim \glspl{uav}. Spoofer \gls{uav} power is constrained. While spoofing distant \glspl{gnb} achieves high $P_s$ with lower power, spoofing nearby \glspl{gnb} causes larger localization errors even under weighted localization algorithms. Accordingly, we study three attack strategies: \textit{focused attack} (targeting only the strongest link with full power), \textit{global attack} (targeting all links with water-filling power allocation), and \textit{selective attack} (targeting only weak links with water-filling power allocation).
      
   \subsection{LMF-Based resilient and spoofer localization}\label{subsec:resiĺoc}
   The key to resilient localization lies in distinguishing the spoofed \gls{tdoa} measurements from the actually received \gls{tdoa} measurements. The anomaly detection filter is denoted by:
  \begin{equation}
   \{\hat{\bm{\tau}}_S(k,n), \hat{\bm{\tau}}_R(k,n)\} = \mathcal{D} [\bm{\tau}_o(k,n)],
  \end{equation}
   where $\hat{\bm{\tau}}_S$ and $\hat{\bm{\tau}}_R$ are the filtered spoofed and actually received \gls{tdoa} measurements, respectively. The missing entries of $\hat{\bm{\tau}}_R$ can be filled by interpolation and used to localize the \gls{uav}. Meanwhile, $p_n$ and $\hat{\bm{\tau}}_S$ can be used to localize the spoofer.
   \begin{figure}[!t]
    \centering
    \includegraphics[width=0.99999999\linewidth]{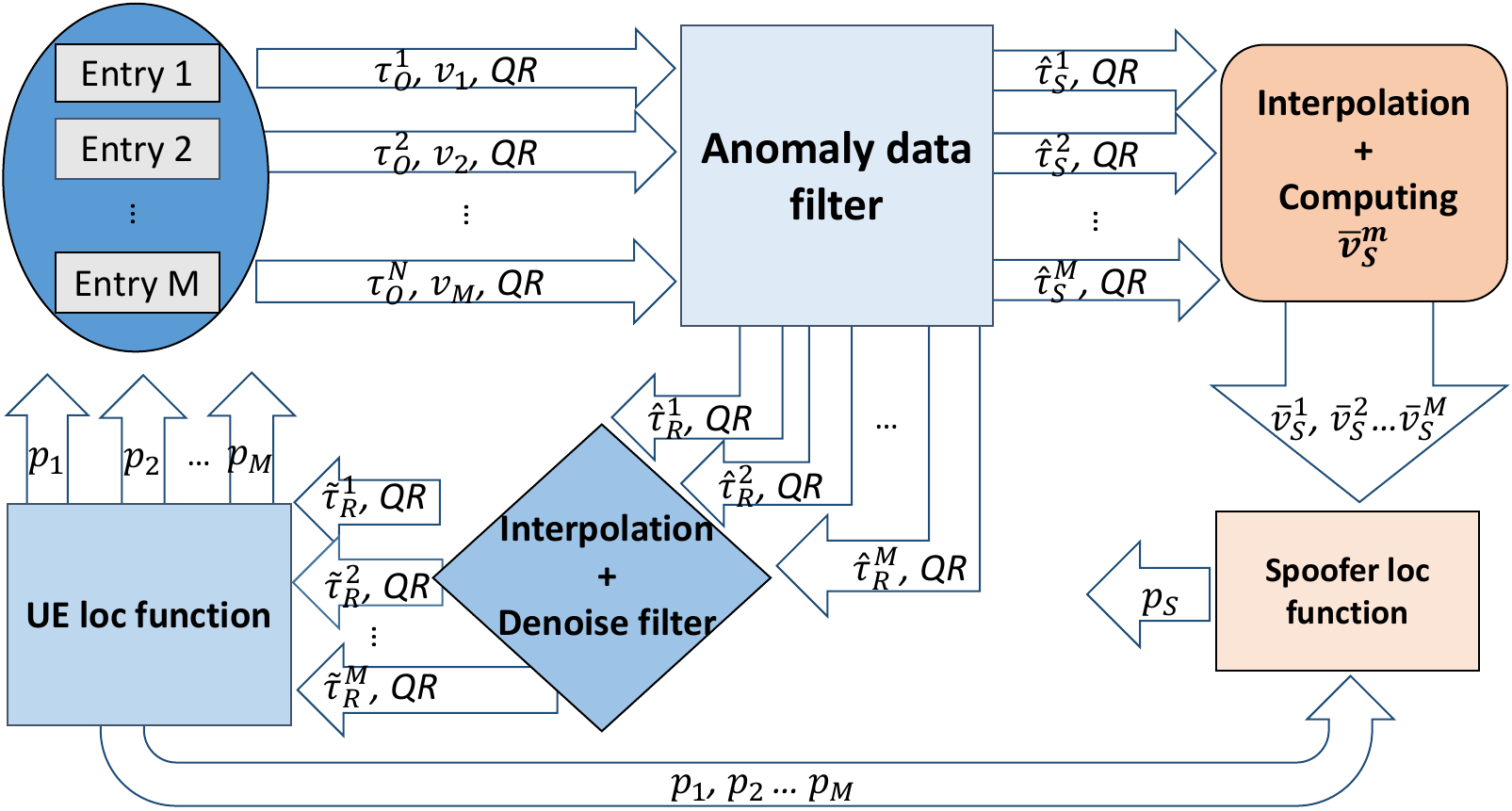}
    \caption{Diagram of resilient localization and spoofer localization)}\label{fig:resilocflow}
    \end{figure}
   A successful spoofing attack occurs when the spoofing pulses arrive ahead of the genuine pulse. Directly applying $\hat{\bm{\tau}}_S$ for distance estimation is infeasible due to unknown pulse transmission times. However, the \gls{uav} motion with respect to the spoofer generates nanosecond-level \gls{tdoa} variations. Since the spoofed pulse signature remains identical across transmissions under small variations, the relative motion can be estimated from the temporal differences:
  \begin{equation}
   v_S = \hat{\bm{\tau}}_S(k,n) - \hat{\bm{\tau}}_S(k-1,n).
  \end{equation}
  A single measurement of $v_s$ is highly unreliable; we consider the averaged result from $\hat{\bm{\tau}}_S$, denoted as $\overline{v_s}$, to be a valid measurement. In the localization system, multiple \glspl{uav} can be spoofed simultaneously, or a single \gls{uav} can use the localization service multiple times, with each instance considered as one data entry. We consider $M$ such data entries. The velocity and position of the $m$-th \gls{uav} are denoted by $\bm{v}^m$ and $\bm{p}^m$, respectively, while the spoofer position is $\bm{p}^s$. The radial velocity between the $m$-th \gls{uav} and the spoofer is given by:
  \begin{equation}
    v_S^m = \frac{\bm{v}^m \cdot (\bm{p}^m - \bm{p}^s)}{\|\bm{p}^m - \bm{p}^s\|}.
  \end{equation}
  The spoofer position $\bm{p}^s = [x^s, y^s, z^s]^\mathrm{T}$ is estimated by minimizing the residual between the measured and predicted radial velocities. For $M$ measurements, the optimization problem is formulated as:
\begin{equation}
    \bm{p}^s = \arg\min_{\bm{p}} \sum_{m=1}^{M} \left( v_S^m - \frac{\bm{v}^m \cdot (\bm{p}^m - \bm{p})}{\|\bm{p}^m - \bm{p}\|} \right)^2,
\end{equation}
where $v_s^m$ is the measured radial velocity for the $m$-th \gls{uav}, and $\bm{v}^m$ and $\bm{p}^m$ are the known velocity and position of the $m$-th \gls{uav}. The initial guess for the optimization is set to the centroid of all \gls{uav} positions. 

We consider the following anomaly data filter strategies: \subsubsection{\gls{tcv}} 
This approach validates triangular consistency, where the degree of consistency is determined by how many measurements fulfill the consistency check. This approach is simple, requires low computational overhead, and effectiveness in identifying inconsistencies in distance measurements, as demonstrated in \cite{securetmcwon2019}. For two random \glspl{gnb}, $i,j \in \{1,2,\ldots, N\}$, the consistency check is defined by:
\begin{equation}
  |\hat{d}_k^i - \hat{d}_k^j | - \epsilon_t \leq d^{i,j} \leq \hat{d}_k^i + \hat{d}_k^j + \epsilon_t,
\end{equation}
where $\hat{d}_k^i$ and $\hat{d}_k^j$ are the measured distances to different \glspl{gnb}, $d^{i,j}$ is the distance between the two \glspl{gnb}, and $\epsilon_t = 1.97(\sigma_{\mathrm{SNR}}^i+\sigma_{\mathrm{SNR}}^j)$ is the tolerance threshold based on the \gls{snr}-dependent distance measurement standard deviation. Measurements failing a single consistency check are filtered as anomalies.
\subsubsection{\gls{sdet}} 
In this approach, we use the imprecise \gls{gnss} position to validate the current distance measurement. The baseline distance is given by $d_\mathrm{bs} = \|\bm{p}_\mathrm{GNSS} - \bm{p}^n\|$, and the tolerance threshold is $\epsilon_t = 1.97(\sigma_{\mathrm{SNR}}^n+\sigma_{\mathrm{GNSS}})$. The static error check is determined by:
\begin{equation}
 d_\mathrm{bs} - \epsilon_t \leq \hat{d}^n \leq d_\mathrm{bs} + \epsilon_t.
\end{equation}
\subsubsection{\gls{rdef}} This approach batches the data by time step $k$. Initially, the first batch is filtered by \gls{sdet}, then the \gls{gd} block estimates the position and returns it to \gls{sdet} for verification, as depicted in Fig.~\ref{fig:RDEF}. The current estimate combined with the reported \gls{uav} velocity provides a warm-start initialization for the next time step, ensuring small variations from the true position and enabling convergence with fewer iterations and a smaller learning rate. As localization accuracy improves iteratively, the tolerance threshold is adaptively tightened: $\epsilon_t = 1.97(\sigma_{\mathrm{SNR}}^n+\beta_t\sigma_{\mathrm{GNSS}})$, where $\beta_t$ is the reduction factor.
\begin{figure}[!htbp]
    \centering
    \includegraphics[width=0.99999999\linewidth]{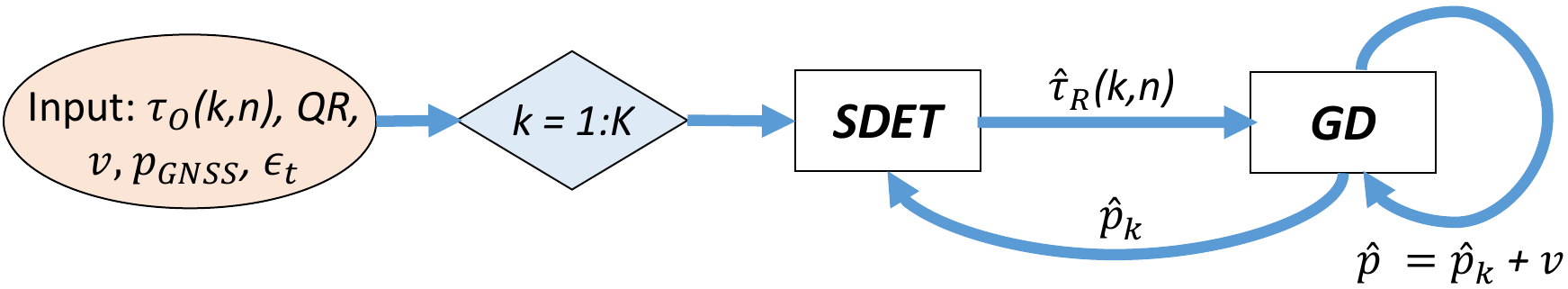}
    \caption{Diagram of resilient localization and spoofer localization)}\label{fig:RDEF}
    \end{figure}

   %The time points at which the \gls{uav} receives all timestamps are:
   % \begin{equation}
   %  \mathbf{T}_R = \mathbf{T} + \bm{\tau} + \bm{\epsilon}_D + \delta_u,
   %  \end{equation}
   % where $\bm{\tau}, \bm{\epsilon}_D \in \mathbb{R}^{K \times N}$ represent the travel time and random transmission delays, respectively. The coarse \gls{tdoa} without calibration is, 
   % \begin{equation}
   % \bm{\tau}_m =\mathbf{T}_R - \mathbf{T}_X = \bm{\tau}+\bm{\epsilon}_D+(\delta_u-\bm{\delta})\otimes \mathbf{1}^{K \times 1}.
   %  \end{equation}
   %  Assuming that the transmission delays of different nodes are independent and identically distributed random variables drawn from a folded normal distribution, $\epsilon_j^n\sim|\mathcal{N}(0,\sigma^n)|$

   \section{Simulation results}\label{sec:simu}
   \subsection{Evaluation of node selection strategies}
     \begin{figure*}[!htbp]
    \centering
    \begin{subfigure}[b]{0.30\textwidth}
        \centering
        \includegraphics[width=\linewidth]{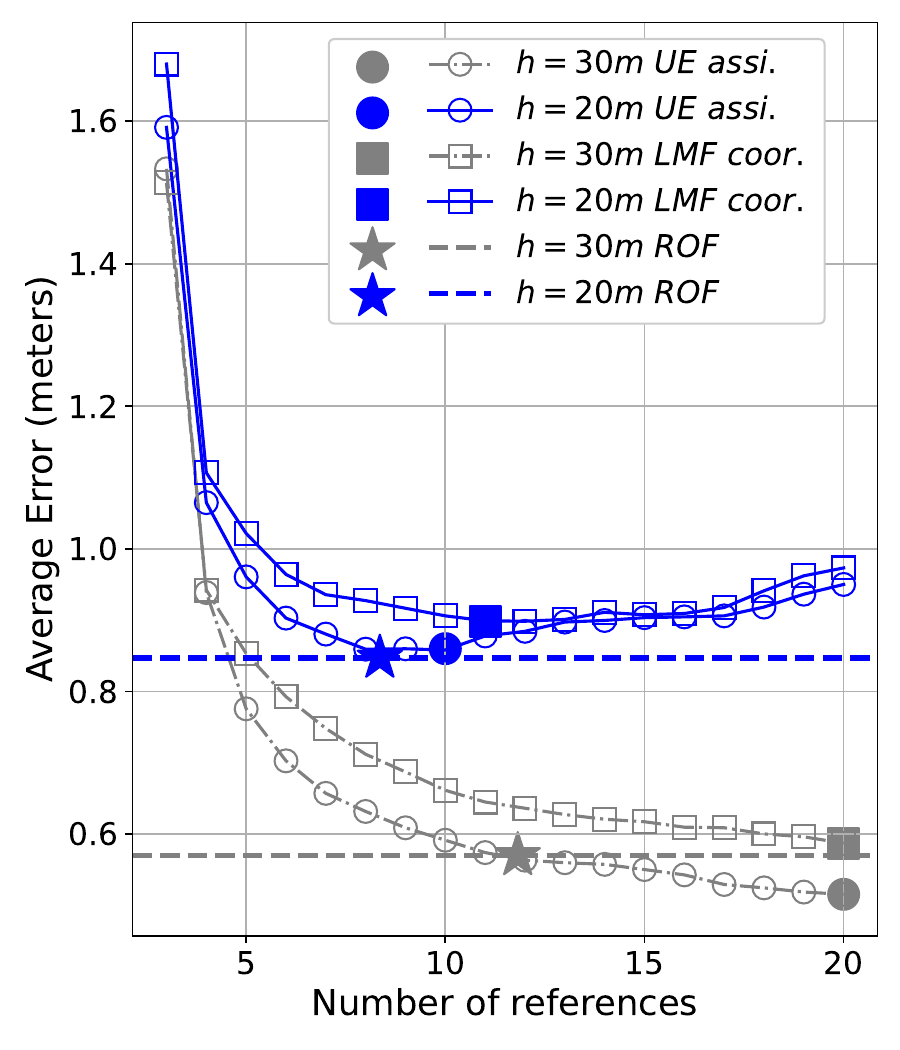}
        \caption{ $R = \SI{60}{\meter}$}
        \label{fig:cl}
    \end{subfigure}
    \begin{subfigure}[b]{0.30\textwidth}
        \centering
        \includegraphics[width=\linewidth]{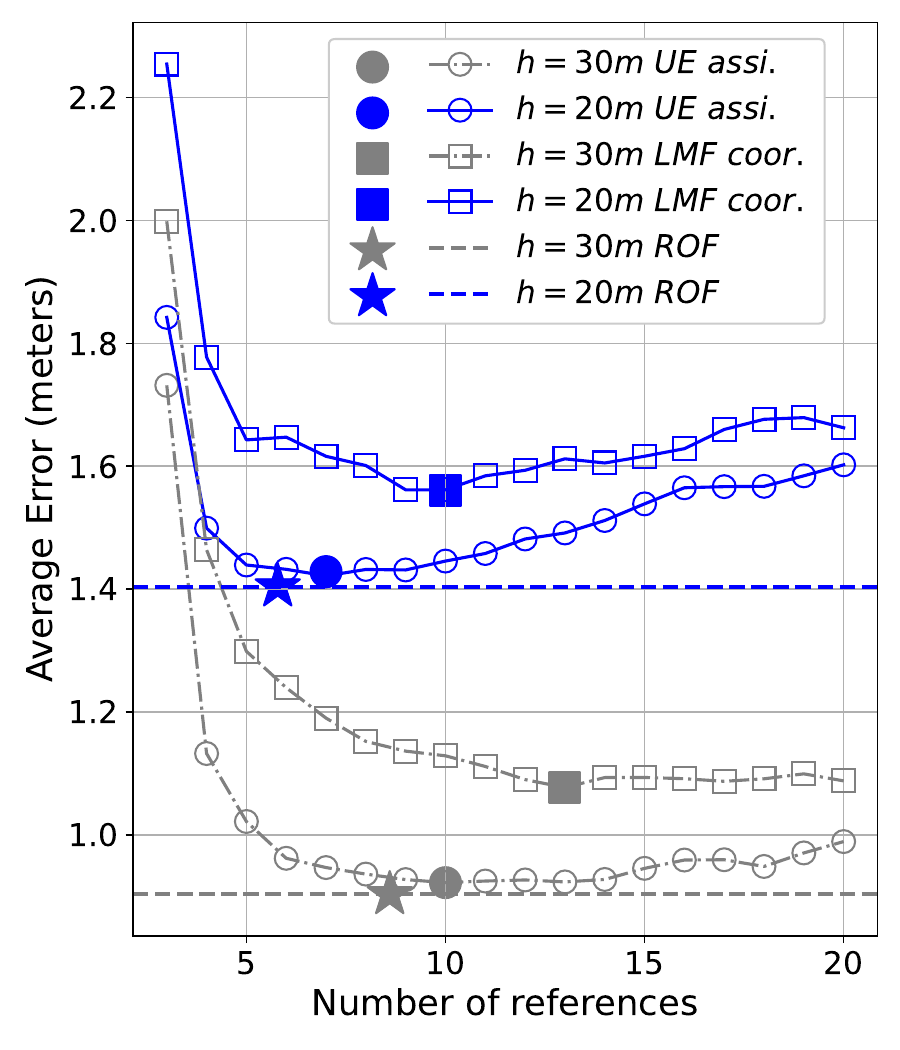}
        \caption{$R = \SI{90}{\meter}$}
        \label{fig:randomdep}
    \end{subfigure}
    \begin{subfigure}[b]{0.30\textwidth}
        \centering
        \includegraphics[width=\linewidth]{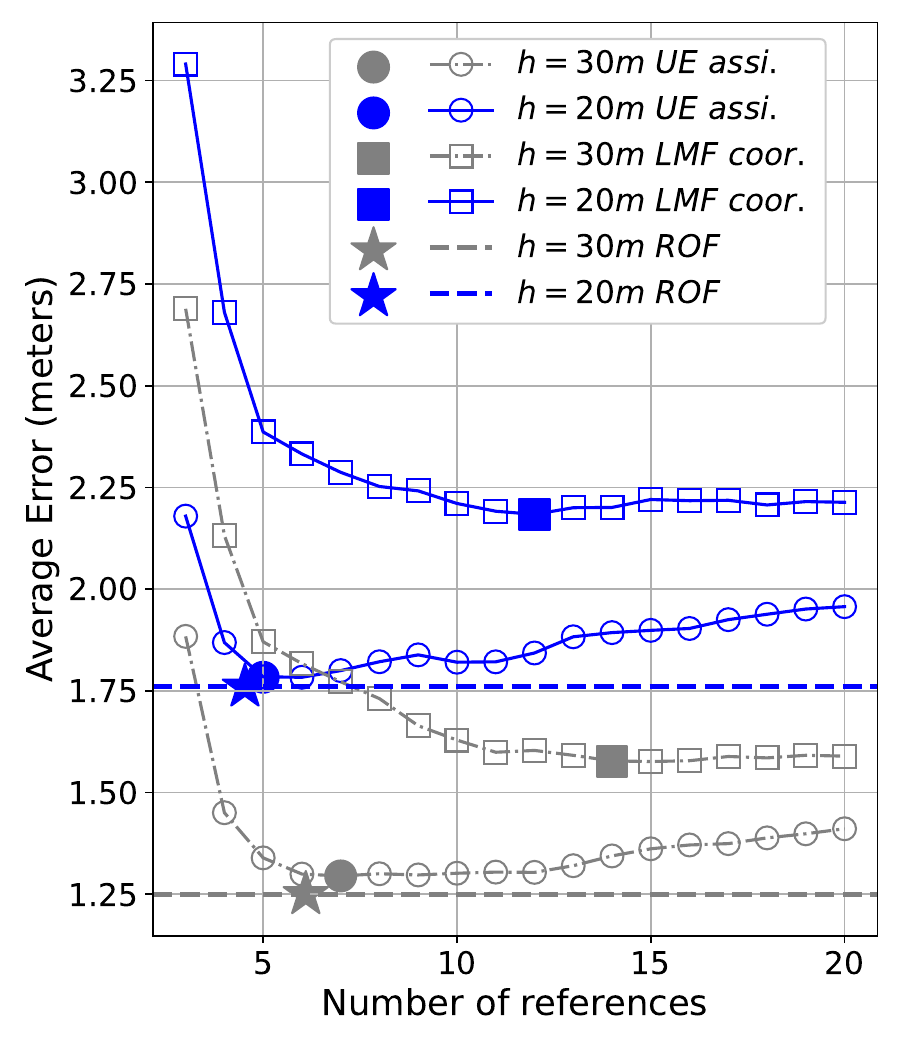}
        \caption{$R = \SI{120}{\meter}$}
        \label{fig:coordinateddep}
    \end{subfigure}    
    \caption{\gls{ue} Localization error of node selection under varying $h$ and $R$ (optimal $N$ are marked with solid markers).}
    \label{fig:optsimresu}
\end{figure*}

  First, we compare the performance of \gls{rof} with the aforementioned approaches under varying \gls{gnb} densities and \gls{uav} altitudes. We assume that unilateral reference issues are resolved by the \gls{lmf}. All nodes are synchronized, and any residual synchronization offsets are calibrated by the \gls{lmf} during the localization process. Table~\ref{tab:setup1} lists the detailed simulation setup, including channel parameters, signal properties, deployment configurations, and gradient descent parameters, to ensure reproducibility. The delay spread, Rician factors, and number of multipaths were adapted from \cite{3gpp.38.901} to ariel scenarios, while the transmit power corresponds to a small-cell setup and the noise floor accounts for both thermal and environmental interference noise.
      \begin{table}[!t]
		\centering
        \scriptsize
		\caption{Simulation setup 1}
		\label{tab:setup1}
		\begin{tabular}{>{}m{0.2cm} | m{1.6cm} l m{3.7cm}}
			\toprule[2px]
%			\rowcolor{white}
			&\textbf{Parameter}&\textbf{Value}&\textbf{Remark}\\
			\midrule[1px]        
			&$f_c$&$3.5$ Ghz& Carrier frequency\\

			&$K$&$(0.1,3.0)$& Rician factors\\
			&$N_p$&$4$& Number of multipath\\
            & $\tau_\text{max}$ & 2e-7 s &  Maximum delay spread\\ 
            & $P_\text{t}$ & 15 dBm &  Transmitting power\\ 
            & $N_\text{o}$ & -91 dBm &  Noise floor\\ 
            & $\beta_n$ & 10 Mhz &  Bandwidth\\ 
            & $\sigma_t$ & $1 \mu s $&  Average synchronization error\\
            \midrule[1px]
            \multirow{-9.9}{*}{\rotatebox{90}{\textbf{TDOA}}}
            & $h_u $&$[20,30]$& \gls{uav} altitude\\ 
            &$\sigma_h$&$1$& Altitude standard deviation\\
			&$\sigma_\text{GPS}$&$5$& Initial GPS standard deviation\\
            &$R$&$[60,90,120]$& Node coverage\\
            &$h_n$&$\sim\mathcal{U}(0,5)$& Node Altitudes\\
			%\rowcolor{gray!30}
			\multirow{-5.2}{*}{\rotatebox{90}{\textbf{Deployment}}}
            & $\Delta_\text{LOS}$ & $[-0.4,0.1]$&  \gls{los} probability modification\\
            & $d_{gc} $&$\sim\mathcal{U}(0,60)$& \gls{uav} distances to the geometry center\\ 
			%\midrule[1px]
            \midrule[1px]
            & $\alpha; \beta$ & $4;0.5$&  Learning rate and discount factor\\
            & $m$ & 1e-5&  Momentum\\
            & $I$ & $50$&  Maximum iteration\\
            \multirow{-4.2}{*}{\rotatebox{90}{\textbf{GD}}} & $\theta_t$ & 4e-5&  Convergence threshold\\
            \bottomrule[2px]
		\end{tabular}
	\end{table}
     However, involving a large number of references is often impractical. Therefore, we limit the number of reference nodes to $20$. The aggregated results from $1,000$ simulations are shown in Fig.~\ref{fig:optsimresu}.
  
   The simulation results confirm our analytical findings: localization considering node distribution has an optimal solution regardless of whether \gls{lmf} coordinated or \gls{ue} assisted node selection is employed. $N_\text{opt}$ varies with respect to node coverage and altitude. While $N$ is limited to $20$, \gls{ue} assisted selection generally outperforms \gls{lmf} coordinated selection, but exhibits a steep gradient when increasing beyond its minimum point.
   Despite higher altitude $h_u = \SI{30}{\meter}$ resulting in generally longer distances to \glspl{gnb}, localization performance remains superior due to improved channel conditions. Algorithm~\ref{alg:rbof} demonstrates slight improvements in localization accuracy while requiring fewer reference nodes by providing dynamic solutions for different scenarios. Except for the case where $h_u = \SI{30}{\meter}$ and $R = \SI{60}{\meter}$, the average optimal solution consistently exceeds $20$ nodes. In this configuration, Algorithm~\ref{alg:rbof} underperforms compared to $N=20$. While the true $N_\text{opt}$ often surpasses $20$, the imposed upper bound of $20$ limits achievable performance. Nevertheless, substantial increases in $N$ provide diminishing returns in performance enhancement. Importantly, the existing localization performance already meets satisfactory standards for most cases.
   \begin{figure}[!t]
    \centering
    \includegraphics[width=0.84\linewidth]{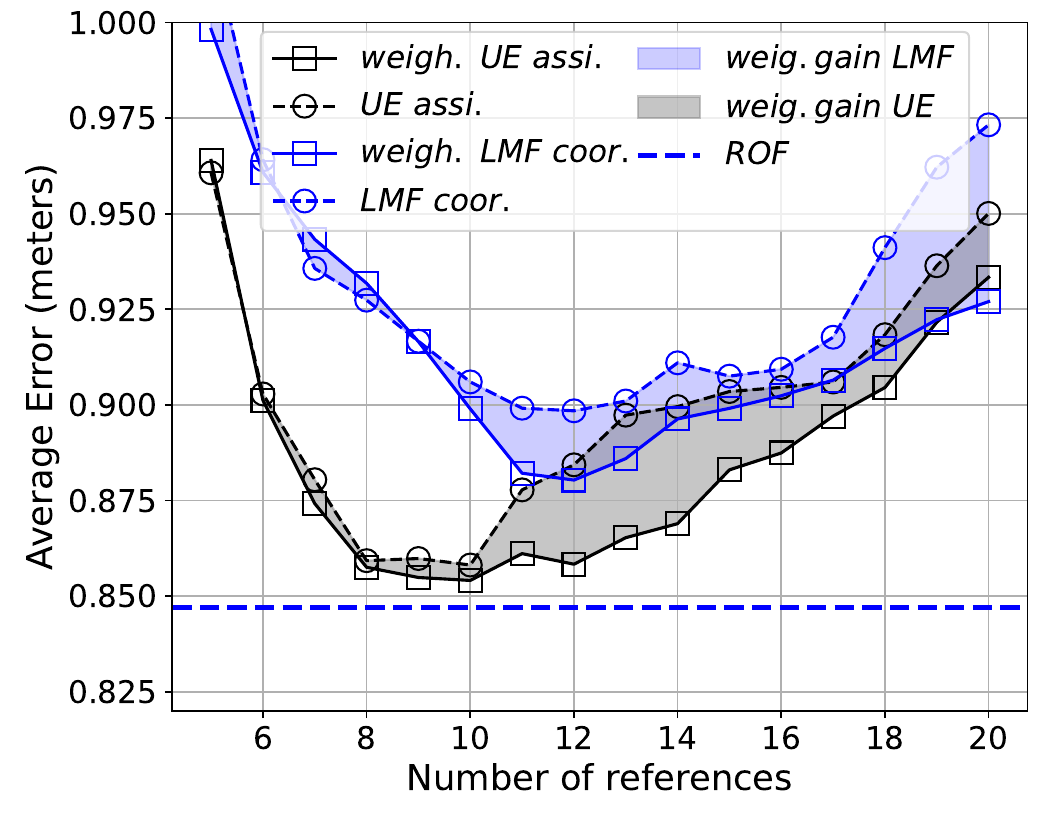}
    \caption{\gls{ue} localization error with enhancement of weighted approach ($R=\SI{60}{\meter}$, $h=\SI{30}{\meter}$)}\label{fig:weighted}
    \end{figure}
    
   Second, we investigate performance enhancement when a weighted approach is applied. When the \gls{ue} returns measurement reports, \gls{rssi} values can also be reported to the \gls{lmf} to enable weighted localization for performance improvement. The weight set is calculated based on the \gls{rssi} set $\boldsymbol{\gamma}_R = \{\text{SNR}_1,\ldots,\text{SNR}_n,\ldots,\text{SNR}_N\}$, given by $\mathbf{w} = N {\boldsymbol{\gamma}_R}/ \sum {\boldsymbol{\gamma}_R}$. The performance comparison between weighted and unweighted approaches is depicted in Fig.~\ref{fig:weighted}. The weighted approach enhances both node selection strategies, but performance degrades when including more distant nodes. For \gls{lmf} coordinated localization, performance is enhanced across the entire domain, though the enhanced optimal performance is still outperformed by \gls{ue} assisted localization. For \gls{ue} assisted localization, the weighted approach provides substantial performance gains when $N$ is not optimized, but offers only marginal improvements when $N$ is already optimized. Overall, \gls{rof} continues to outperform the weighted approach.  
   \begin{figure}[!t]
    \centering
   \begin{subfigure}[b]{0.2115\textwidth}
        \centering
        \includegraphics[width=\textwidth]{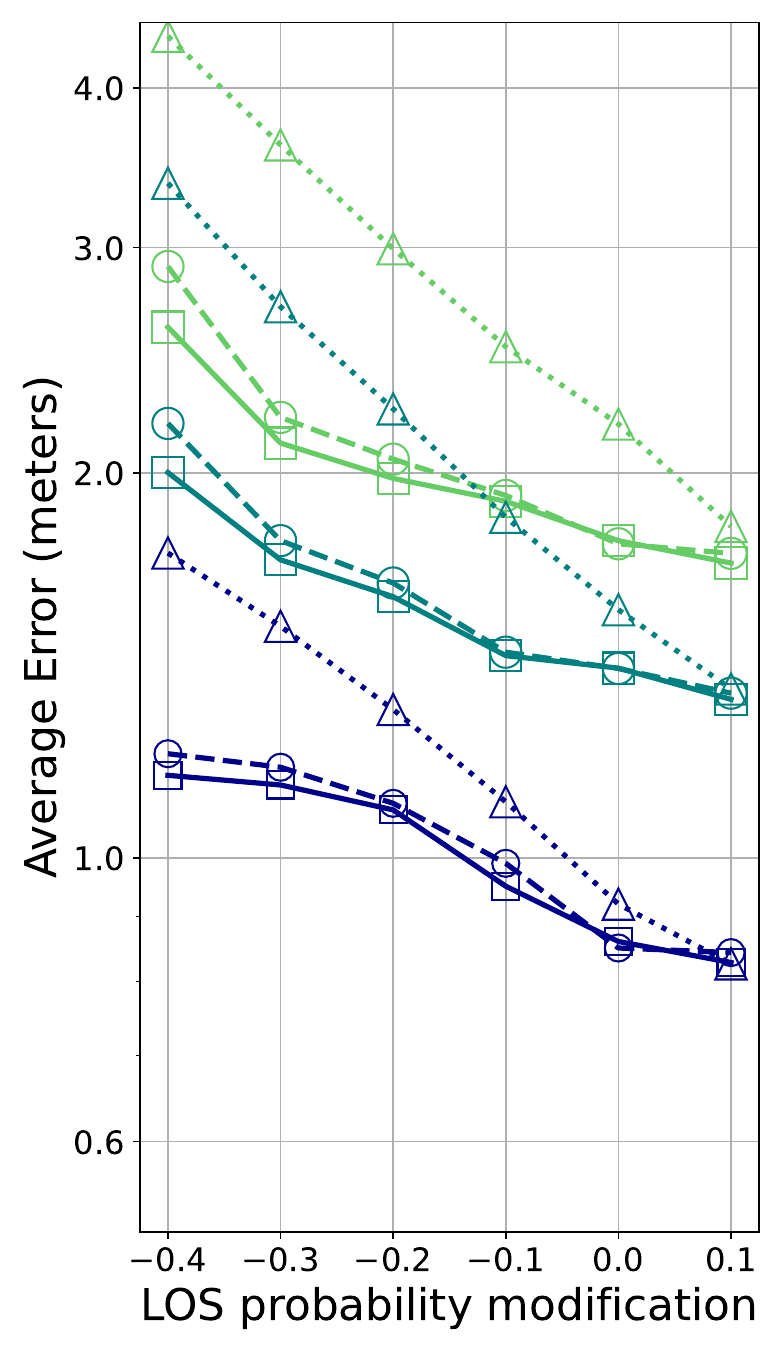}
        \caption{$h = \SI{20}{\meter}$}
    \end{subfigure}
   \begin{subfigure}[b]{0.269\textwidth}
        \centering
        \includegraphics[width=\textwidth]{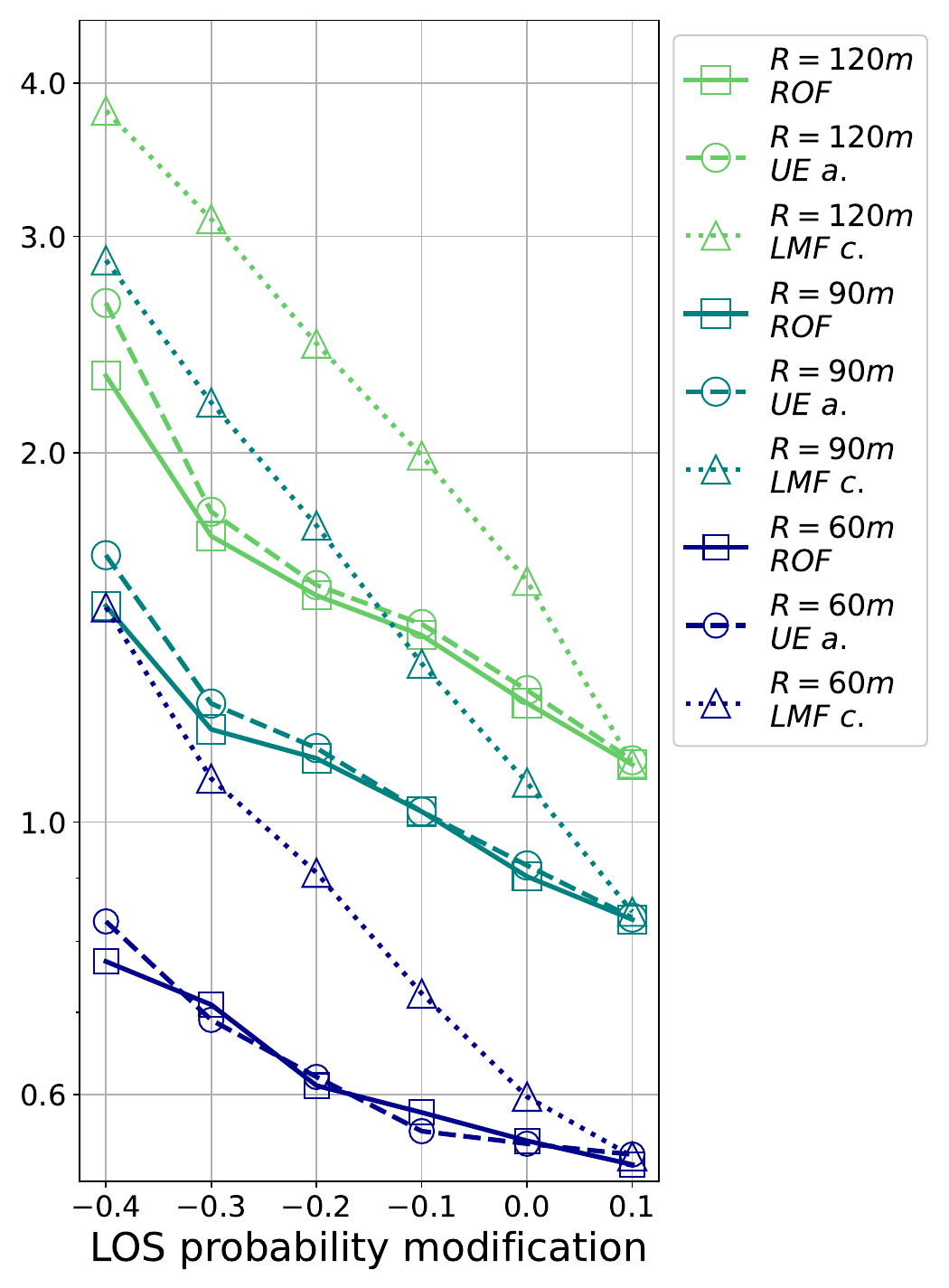}
        \caption{$h = \SI{30}{\meter}$}
    \end{subfigure}
   \caption{\gls{ue} Localization error of \gls{los} variation}
   \label{fig:losvar}
   \end{figure}
   Third, assuming the \gls{uav} has precise knowledge of its altitude and ground reference node positions, a simple empirical optimal solution could theoretically be employed. In practice, however, environmental factors severely affect \gls{los} probability, making empirical approaches unreliable even when altitude errors and deployment uncertainties are ignored. To evaluate this limitation, we modify $P_\text{los}$ according to Eq.~(\ref{eq:problos}) and benchmark the empirical solution against our \gls{rof}-based optimization. Fig.~\ref{fig:losvar} reveals distinct behavioral patterns: \gls{lmf} coordinated localization shows linear degradation with decreasing \gls{los} probability, whereas \gls{ue} assisted localization maintains robustness under poor \gls{los} conditions. Notably, \gls{rof} demonstrates superior performance compared to empirical methods across all scenarios, with particularly pronounced advantages in challenging \gls{los} environments.

   Third, assuming the \gls{uav} has precise knowledge of its altitude and ground reference node positions, a simple empirical optimal solution could theoretically be employed. In practice, however, environmental factors severely affect \gls{los} probability, making empirical approaches unreliable even when altitude errors and deployment uncertainties are ignored. To evaluate this limitation, we modify $P_\text{los}$ according to Eq.~(\ref{eq:problos}) and benchmark the empirical solution against our \gls{rof}-based optimization. Fig.~\ref{fig:losvar} reveals distinct behavioral patterns: \gls{lmf} coordinated localization shows linear degradation with decreasing \gls{los} probability, whereas \gls{ue} assisted localization maintains robustness under poor \gls{los} conditions. Notably, \gls{rof} demonstrates superior performance compared to empirical methods across all scenarios, with particularly pronounced advantages in challenging \gls{los} environments.

  \subsection{Evaluation of resilient localization}\label{subsec:resiloceva}
  We consider a scenario with one \gls{uav} and one spoofer to isolate and demonstrate the fundamental attack-defense dynamics. Extension to multiple \glspl{uav} involves consensus optimization and is left for future work. The evaluation proceeds in two phases: the configuration is optimized under benign conditions; then, the spoofer initiates attacks on the localization process. The \gls{lmf} simultaneously performs secure victim localization while recording anomaly data for spoofer localization. Both the spoofer and \gls{uav} are randomly positioned near the center of the deployment area. Simulation parameters are listed in Tab.~\ref{tab:setup2} for reproducibility; parameters unchanged from Tab.~\ref{tab:setup1} are omitted. The simulation results are based on $1,000$ Monte Carlo runs. Under the described deployment setup, the average optimized node selection count is
  $8$. Compared to previous evaluations, the localization performance is significantly improved in the absence of attacks, thanks to multiple rounds of \gls{tdoa} measurements.
  \begin{table}[!t]
		\centering
        \scriptsize
		\caption{Simulation setup 2}
		\label{tab:setup2}
		\begin{tabular}{>{}m{0.2cm} | m{1.6cm} l m{3.7cm}}
        \toprule[2px]
            & $h_u $&$25$& \gls{uav} altitude\\ 
			&$\sigma_\text{GPS}$&$\sim\mathcal{U}(3,7)$& Initial GPS error standard deviation\\
            &$R$&$100$& Node coverage\\
            %&$N$&$8$& average \glspl{gnb} number\\
            &$\Delta_t$&$50$ ms& Measurement interval\\
            \multirow{-5.2}{*}{\rotatebox{90}{\textbf{Config}}}&$K$&$10$& Measurement rounds\\
            \midrule[1px]
			& $d_{gc} $&$\sim\mathcal{U}(0,60)$& Spoofer and legitimate \gls{uav} distances to the geometry center\\ 
            & $\theta_{d} $&$-10$ dB& \gls{tdoa} detection threshold with respect to maximum peak\\ 
            & $\Delta_{u}, \Delta_{sp} $&$1000$ ns & Synchronization error upper bound\\ 
             \multirow{-5.2}{*}{\rotatebox{90}{\textbf{Spoofer}}}& $N_p $&$5$  & Spoofing pulse number\\ 
             \midrule[1px]
            & $M $&$20$  & Data entries\\ 
            & $\alpha $&$1.5$  & Learning rate\\
            & $I_m $&$5$  & Maximum iteration\\
            \multirow{-4.2}{*}{\rotatebox{90}{\textbf{RDEF}}}& $\beta_t $&$0.97, 0.99$  & Reduction factor \gls{rdef}\\
            \bottomrule[2px]
		\end{tabular}
	\end{table}
 \begin{figure}[!t]
    \centering
    \begin{subfigure}[b]{0.241\textwidth}
        \centering
        \includegraphics[width=\textwidth]{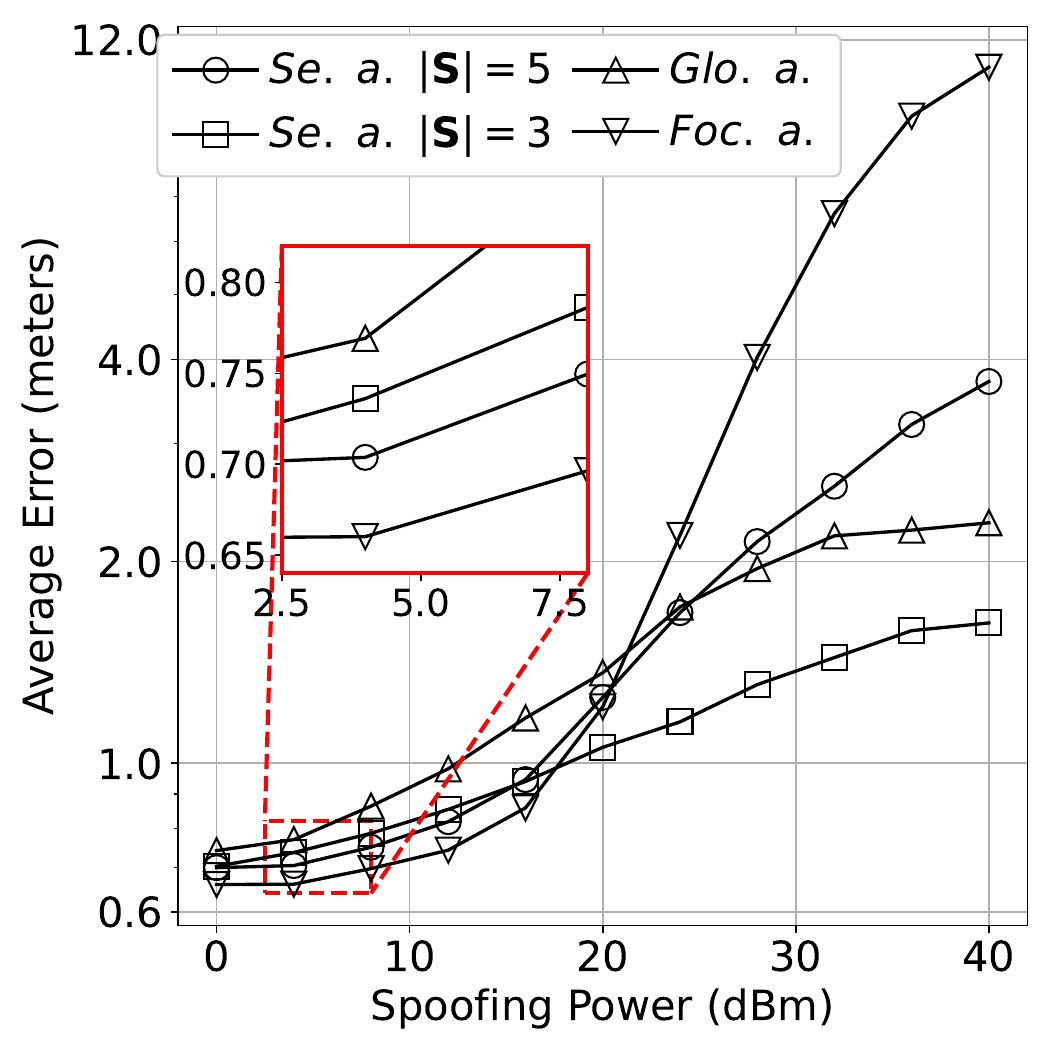}
        \caption{Baselines}\label{fig:bsat}
    \end{subfigure}
   \begin{subfigure}[b]{0.241\textwidth}
        \centering
        \includegraphics[width=\textwidth]{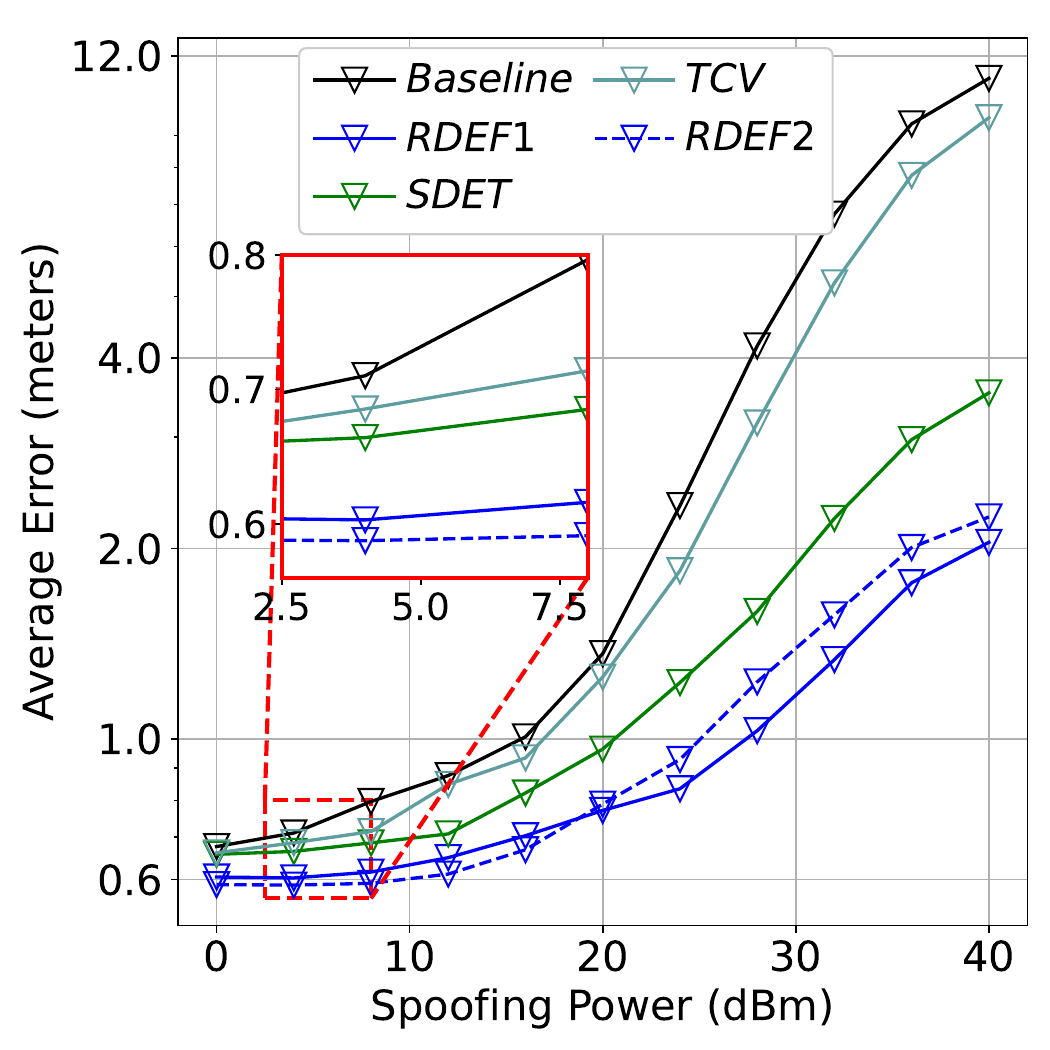}
        \caption{\emph{Global attack}}\label{fig:bsglo}
    \end{subfigure}
   \begin{subfigure}[b]{0.241\textwidth}
        \centering
        \includegraphics[width=\textwidth]{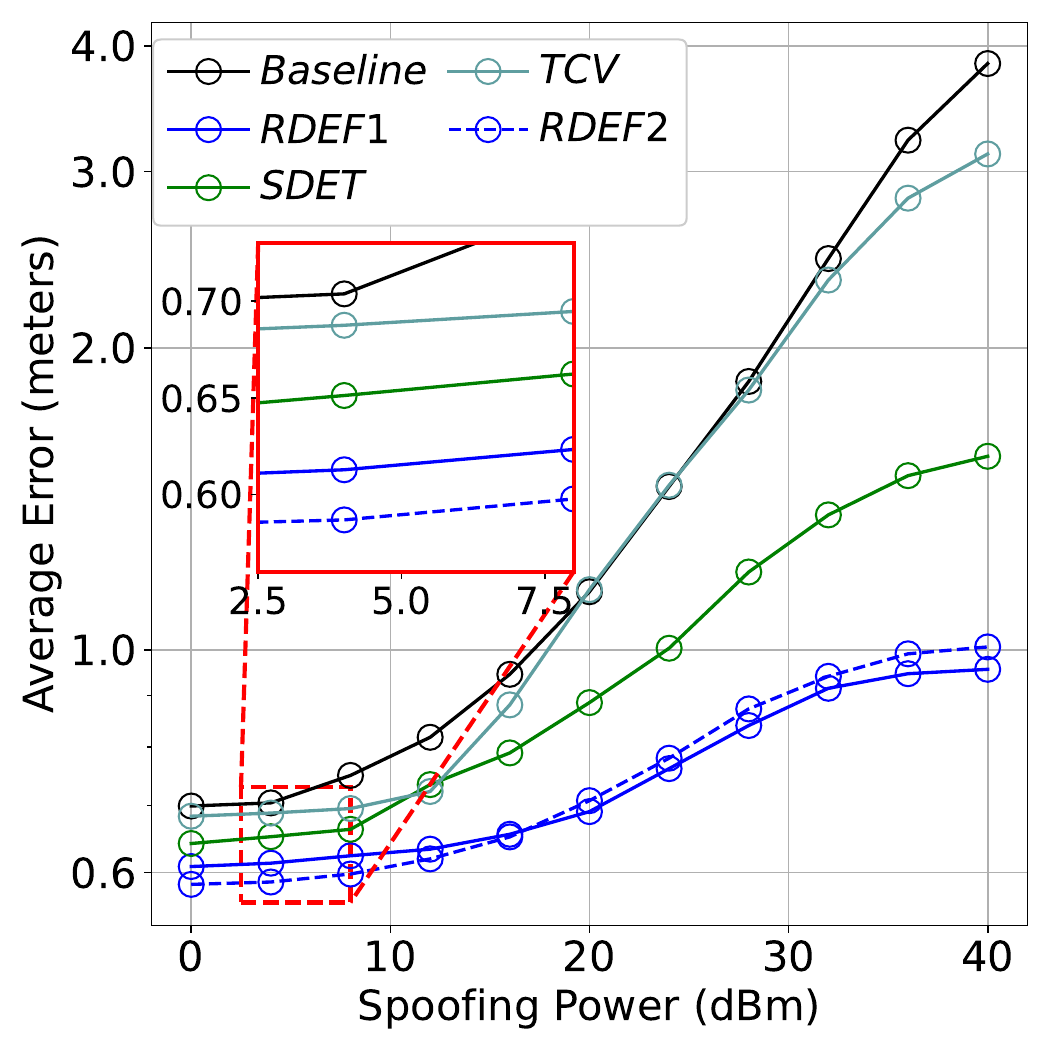}
        \caption{\emph{Selective attack} $|\bm{S}| = 5$}\label{fig:bssel}
    \end{subfigure}
   \begin{subfigure}[b]{0.241\textwidth}
        \centering
        \includegraphics[width=\textwidth]{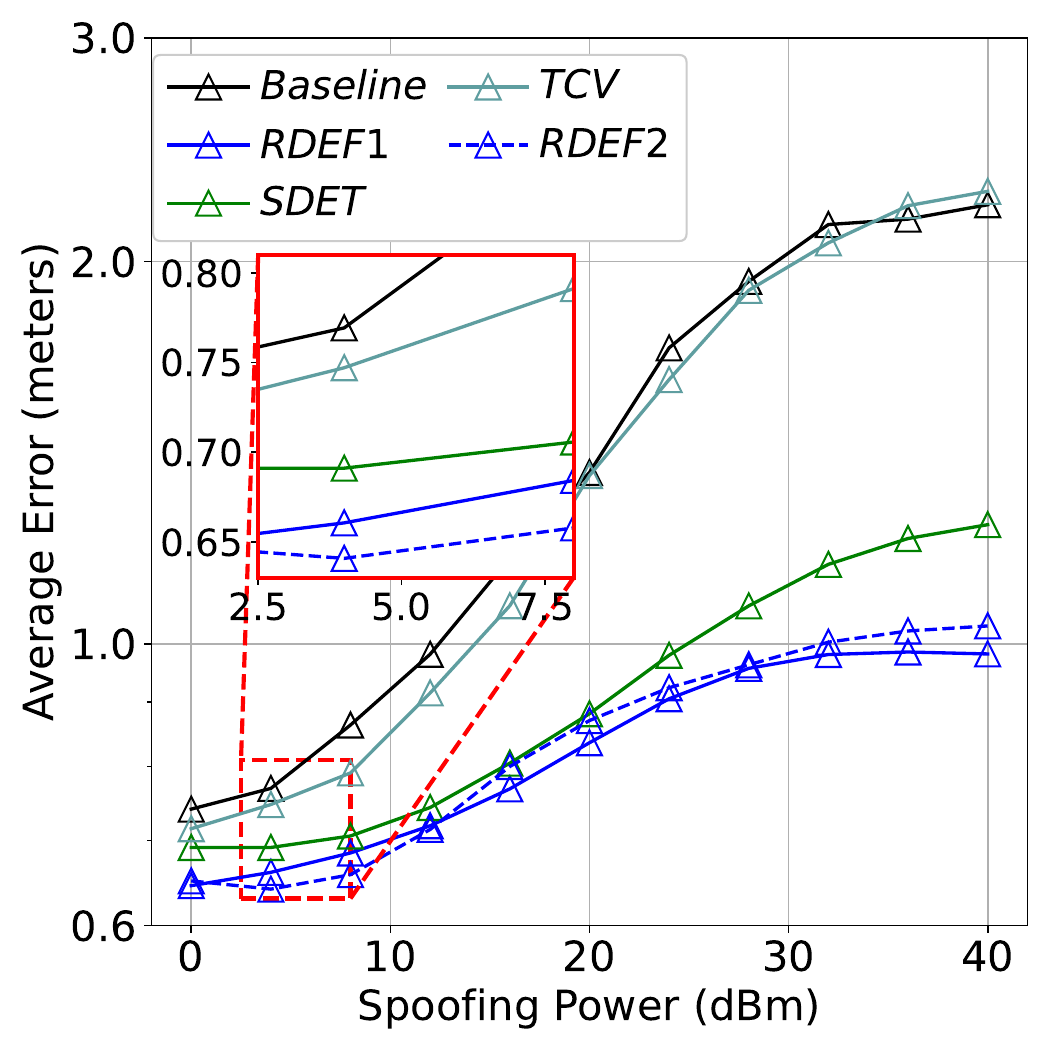}  
        \caption{\emph{Focused attack}}\label{fig:bsfoc}
    \end{subfigure}\caption{\gls{ue} localization error under attacks with or without anomaly filtering. \emph{TCV} is from \cite{securetmcwon2019}, while \emph{SDET} is from \cite{GD2025fang}, \emph{RDEF} is the proposed approach.}\label{fig:ls_Ps}, 
    \end{figure}
  
  First, we evaluate the attack effect without any anomaly filtering strategies, where the spoofer and \glspl{uav} are randomly positioned around the geometry center. For \emph{selective attack}, we evaluate two cases with spoofed link sets $|\bm{S}| = 3, 5$. In Fig.~\ref{fig:bsat}, in the low power regime (spoofing power $< 10$ dBm), the attack effectiveness follows: $\emph{Focused attack} > \emph{Selective attack} \:|\bm{S}| = 3 > \emph{Selective attack}\:|\bm{S}| = 5 > \emph{Global attack}$. This indicates that with limited power, a widely spread attack causes spoofing pulses to fail surpassing the \gls{tdoa} detection threshold, making the attack \textbf{power-limited}. In the high power regime (spoofing power $> 30$ dBm), the effectiveness becomes: $\emph{Global attack} > \emph{Selective attack} \:|\bm{S}| = 5> \emph{Focused attack} > \emph{Selective attack} \:|\bm{S}| = 3 $. As analyzed earlier, the weighted approach gives stronger links more influence in localization. Moreover, in Alg.~\ref{alg:MAGD}, the gradient is scaled by distance, meaning malicious gradients from smaller $\tilde{d}_n$ induce larger localization errors. However, $\emph{Global attack}$ and $\emph{Selective attack} \:|\bm{S}| = 5$ outperform $\emph{Focused attack}$ due to significantly more spoofed \gls{tdoa} measurements. Additionally, $\emph{Focused attack}$ converges in this regime, indicating it becomes \textbf{synchronization-limited}, where imperfect synchronization and geometry mismatch constrain further gains despite increased power. In Figs.~\ref{fig:bsglo}-\ref{fig:bsfoc}, we validate the performance of anomaly data filtering strategies. For \gls{rdef} approach, we define $\emph{RDEF1}$ for $\beta_t = 0.97$, and $\emph{RDEF2}$ for $\beta_t = 0.99$. 
  In the high power regime, we have suppression effect $\emph{RDEF1} > \emph{RDEF2} > \emph{SDET} > \emph{TCV}$ and in low power regime, we have $\emph{RDEF2} > \emph{RDEF1} > \emph{SDET} > \emph{TCV}$ . For \emph{TCV}, its supression effect is minimal, because the injected distance error is generally small that it bypass the consistency check. For \emph{RDEF}, it outperforms \gls{sdet} by continuously calibrating the verification process. 

     \begin{figure}[!t]
    \centering
   \begin{subfigure}{0.241\textwidth}
        \centering
        \includegraphics[width=\textwidth]{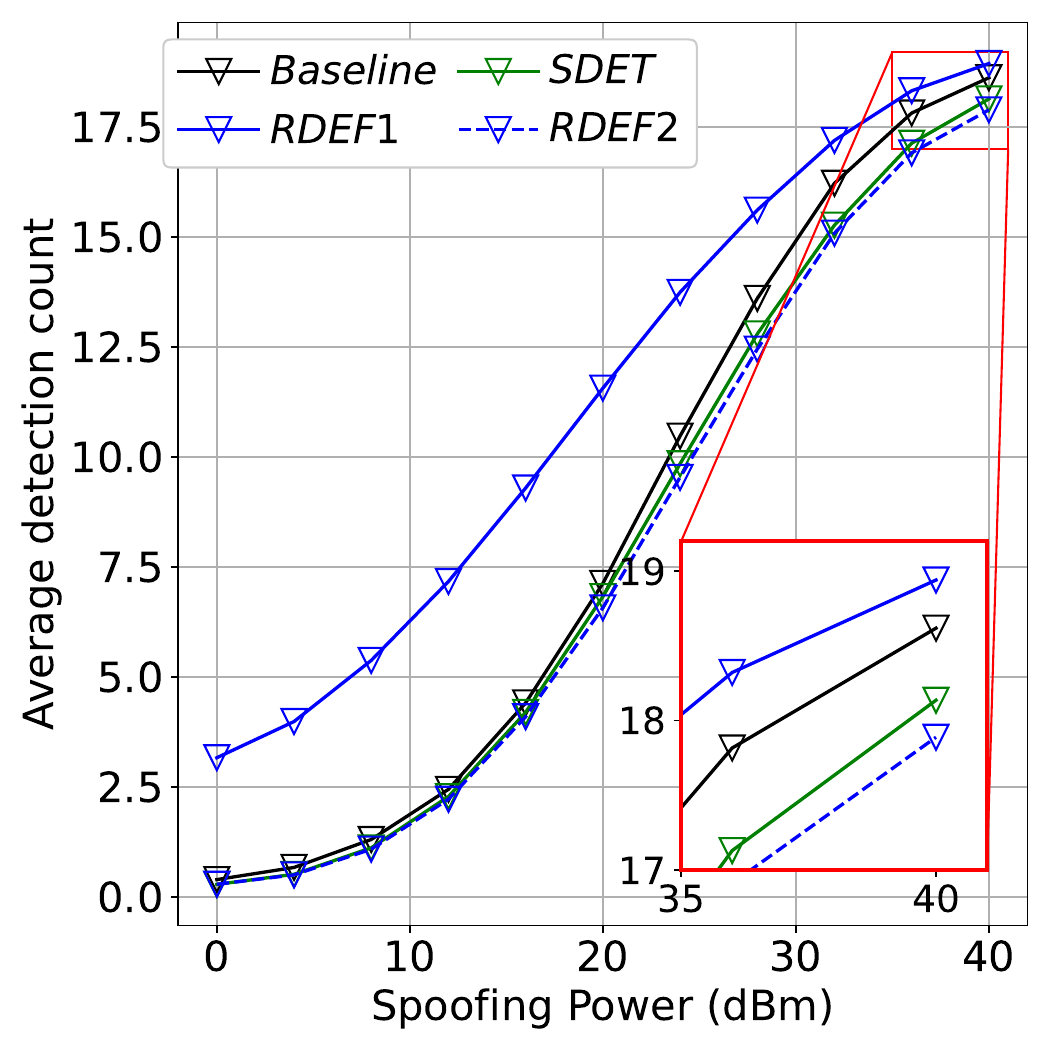}
        \caption{\emph{Global attack}} \label{fig:Gcount}
    \end{subfigure}
   \begin{subfigure}[b]{0.241\textwidth}
        \centering
        \includegraphics[width=\textwidth]{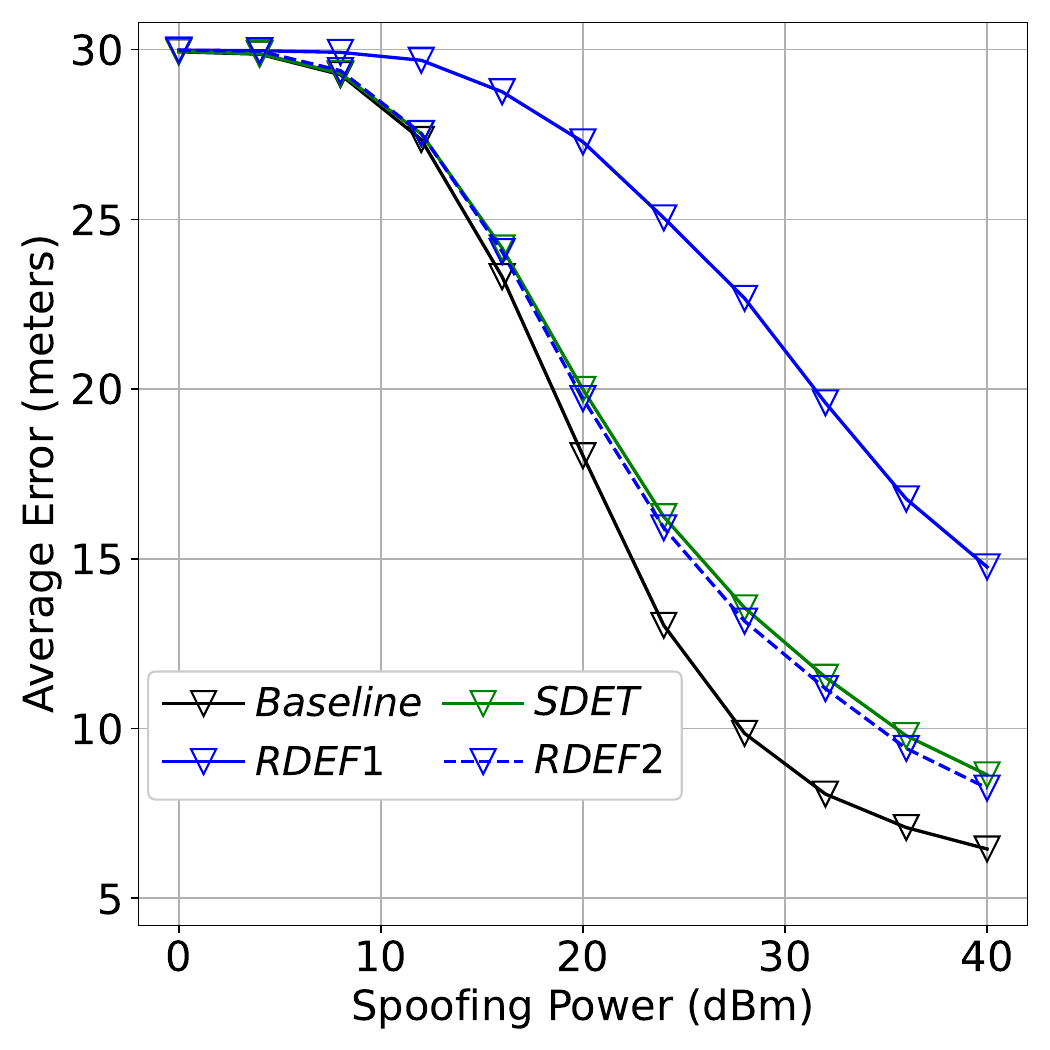}
        \caption{\emph{Global attack}}\label{fig:Glocp}
    \end{subfigure}
    \centering
   \begin{subfigure}{0.241\textwidth}
        \centering
        \includegraphics[width=\textwidth]{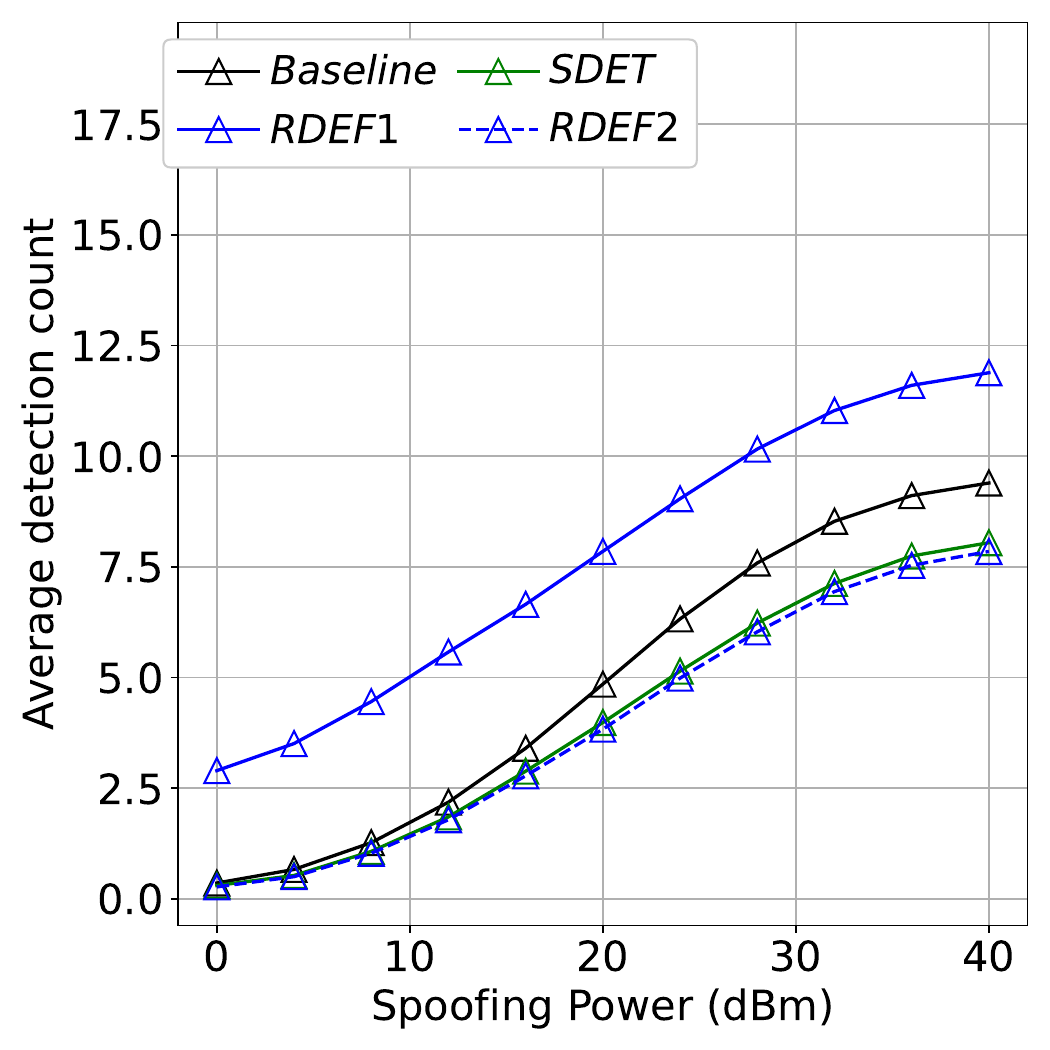}
        \caption{\emph{Focused attack}}\label{fig:Fcount}
    \end{subfigure}
   \begin{subfigure}{0.241\textwidth}
        \centering
        \includegraphics[width=\textwidth]{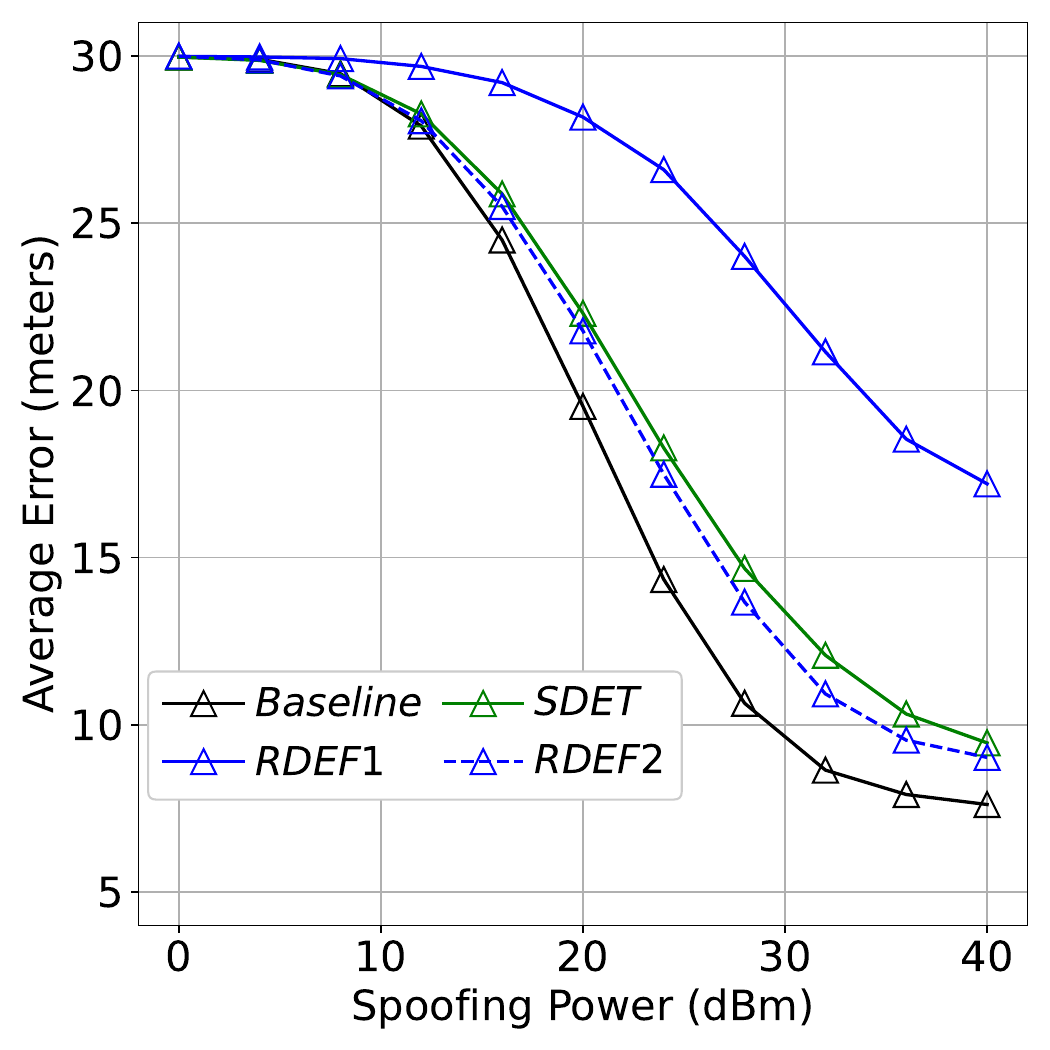}
        \caption{\emph{Focused attack}}\label{fig:Flocp}
    \end{subfigure}
    \begin{subfigure}{0.241\textwidth}
        \centering
        \includegraphics[width=\textwidth]{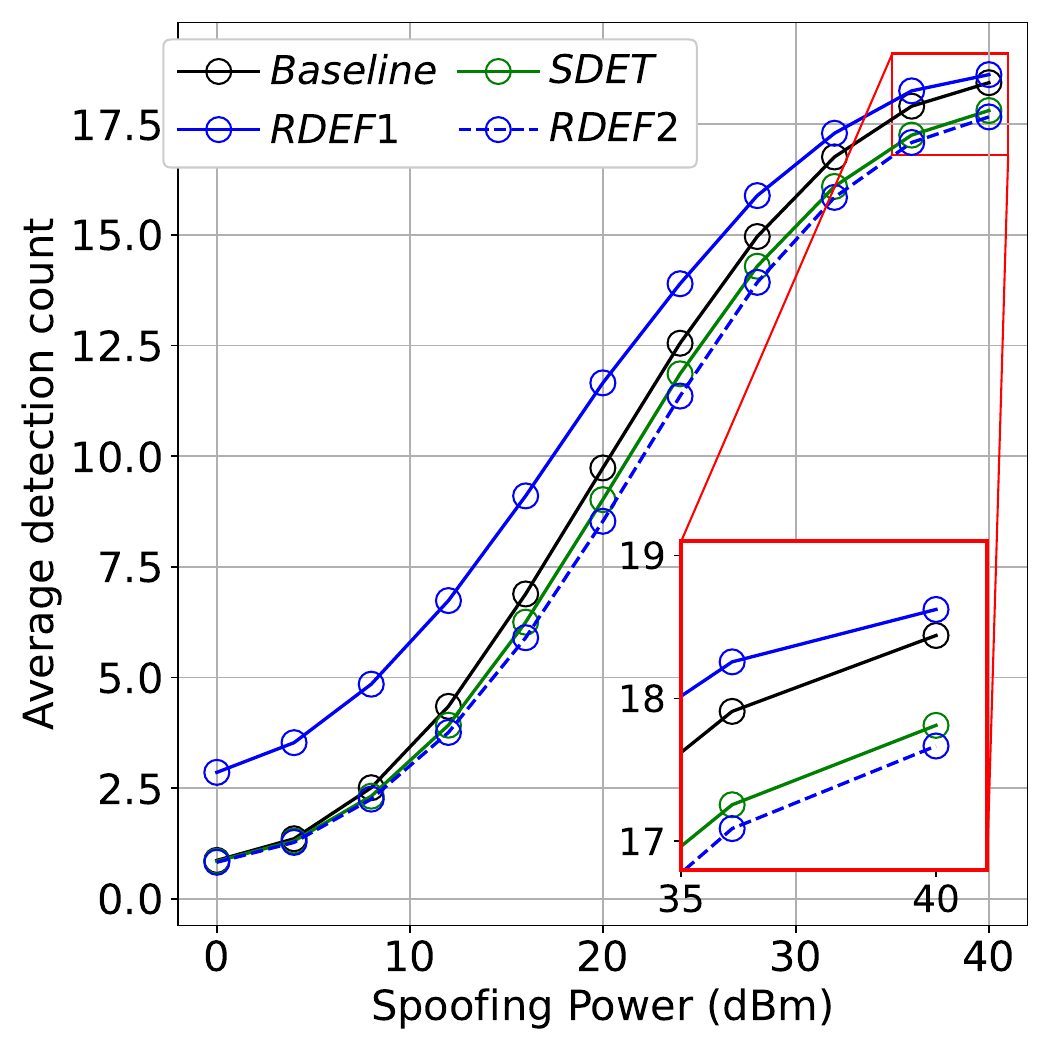}
        \caption{\emph{Selective attack} $|\bm{S}| = 5$}\label{fig:Scount}
    \end{subfigure}
   \begin{subfigure}{0.241\textwidth}
        \centering
        \includegraphics[width=\textwidth]{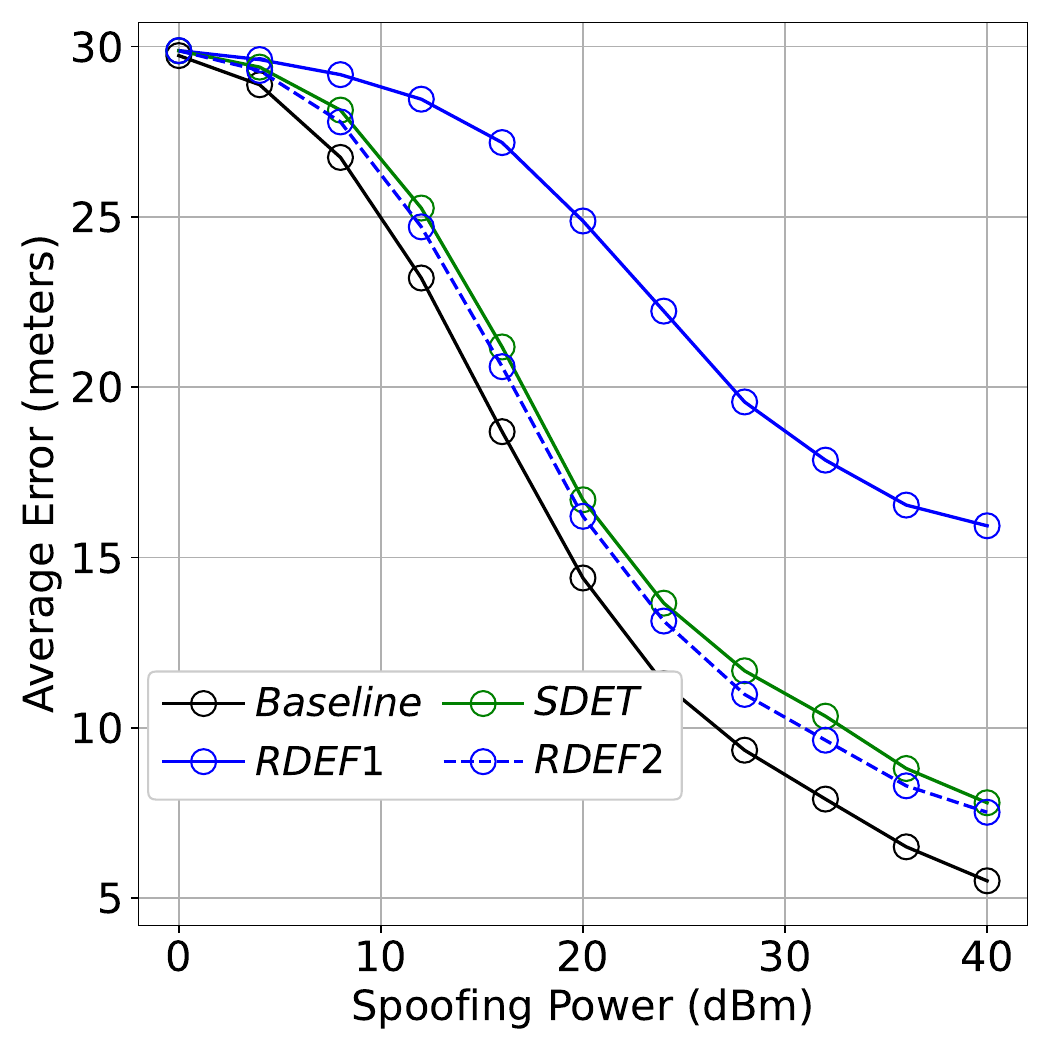}
        \caption{\emph{Selective attack} $|\bm{S}| = 5$}\label{fig:Slocp}
    \end{subfigure}
   \caption{Spoofing detection and spoofer localization error: (a), (c) and (e) show the average number of detected spoofed entries among $20$ entries over $1{,}000$ simulations with the \emph{Baseline} representing the actual spoofed entries; (b), (d) and (f) show the corresponding localization performance based on filtered anomaly data. The \emph{Baseline} is obtained by localizing the spoofer using actual spoofed entries.}
   \label{fig:resiloc}
   \end{figure}
  Second, we evaluate the spoofer localization performance, illustrating the case of a single rogue \gls{uav} attacking the localization service. For scenarios involving multiple rogue \gls{uav}s, the approach in \cite{splocfang2024} can be applied to localize multiple spoofers. Due to the high unreliability of individual $v_s^m$ estimates, spoofer localization is only performed when detected spoofed entries exceed $5$. Though \emph{RDEF1} demonstrates the best \gls{uav} localization performance in Fig.~\ref{fig:ls_Ps}. Figs.~\ref{fig:Gcount}, \ref{fig:Fcount} and \ref{fig:Scount} indicate it is over-protective with a high false-positive rate, resulting in worse spoofer localization performance in Figs.~\ref{fig:Glocp}, \ref{fig:Flocp} and \ref{fig:Slocp}. Conversely, \emph{RDEF2} and \emph{SDET} achieve better balance: despite some false negatives, their low false-positive rates yield significantly better spoofer localization than \emph{RDEF1}. In the low power regime, spoofer localization performance follows: $\emph{Selective attack} > \emph{Global attack}\approx \emph{Focused attack}$, as \emph{Selective attack} produces significantly more detected spoofed entries. In the high power regime, the ranking becomes: $\emph{Selective attack} > \emph{Global attack} > \emph{Focused attack}$. While \emph{Global attack} produces only slightly more detected entries than \emph{Selective attack}, it causes larger localization errors in $\bm{p}^m$, which propagates to spoofer localization. \emph{Focused attack} generates fewer spoofed entries but with higher power per attack, yielding less noisy $v_s^m$ estimates and maintaining comparable localization performance despite fewer entries. 
  
  Our evaluation assumes pre-optimized node selection for efficiency. The spoofer performs attacks after eavesdropping on the configured \glspl{prs}. While including additional sub-optimal nodes could enhance resilience through redundancy, this introduces significant trade-offs: increased configuration complexity, energy consumption, and localization latency. More importantly, our framework incorporates spoofer localization, enabling the system to detect and localize spoofer rather than relying solely on redundancy. 
   \section{Conclusion}\label{sec:conclu}
   This paper presents an integrated framework for performance optimization and security enhancement in \gls{3gpp}-compliant \gls{5gnr} \gls{tdoa}-based \gls{uav} localization. We demonstrate that adaptive node selection improves accuracy while reducing \gls{gnb} usage by $15\%$ compared to the statistical optimal solution based on the \gls{3gpp} \gls{a2g} channel model in small cell scenarios. We expose a novel class of merged-peak spoofing attacks that exploit signal overlap to evade existing detection methods, and through analytical modeling, quantify how synchronization quality and geometric factors determine attack success. To counter these vulnerabilities, we propose a lightweight, network-centric framework at the \gls{lmf} that integrates anomaly detection, robust victim localization, and spoofer localization using only \gls{3gpp}-specified parameters. Extensive simulations validate that our approach reduces victim localization error by $46\%$ under low-power spoofing and $71\%$ under high-power spoofing, while achieving sub-
   $10$m spoofer localization accuracy. This provides a practical foundation for secure \gls{uav} operations    in future low-altitude networks.

\bibliographystyle{IEEEtran}
\bibliography{references, IEEEabrv}
\begin{appendices}

\section{Proof of Lemma 2}\label{sec:proof3}
      \begin{figure*}[!t]


    \centering
     \begin{subfigure}[b]{0.30\textwidth}
        \centering
        \includegraphics[width=\textwidth]{spoofinject-1.pdf}
        \caption{ Case A: $t_{ob}\leq t_r-l_s$}
    \end{subfigure}
   \begin{subfigure}[b]{0.30\textwidth}
        \centering
        \includegraphics[width=\textwidth]{spoofinject-2.pdf}
        \caption{Case B: $t_r-l_s\leq t_{ob}\leq t_r$}
    \end{subfigure}
    \begin{subfigure}[b]{0.30\textwidth}
        \centering
        \includegraphics[width=\textwidth]{spoofinject-3.pdf}
        \caption{Case C: $t_r\leq t_{ob}$}
    \end{subfigure}
    \caption{Spoofing pulse injection: the pink and blue blocks mark the spoofed correlation peak and minimum separation. $t_{ob}$, $t_{x}$, and $t_{r}$ denote the \gls{ue} observation time, and the authentic transmission and reception times, respectively. } \label{fig:toaspoof}%$\tau_u$ and $\tau_{sp}$ denote the signal travel latency of authentic pulse and spoofer pulse.
    \end{figure*}
   We here consider a practical setup where, $\Delta_u > l_s+l_m$. Taking $l_s$ as the minimum peak separation requirement to resolve two distinct correlation peaks. We consider three cases and the synchronization quality is bounded by $\sigma_u\sim\mathcal{U}(0, \Delta_u)$ and $\sigma_{sp}\sim\mathcal{U}(0, \Delta_{sp})$. The probability that $t_{ob} \leq t_r$ (illustrated in Cases A and B in Fig.~\ref{fig:toaspoof}) is given by:  
  \begin{equation}\begin{split}
    P(t_{ob}\leq t_r) &= P(t_{ob}\leq t_r-l_s)+P(t_r- l_s\leq t_{ob}\leq t_r)\\
    &=\frac{\Delta_u + \tau_u-l_s}{2\Delta_u}+\frac{l_s}{2\Delta_u}=\frac{\Delta_u + \tau_u}{2\Delta_u}.
    \end{split}
   \end{equation}
   In Case A, the probability of successful spoofing is
   \begin{equation}\label{eq:pspro} 
   p_s = \min\left(\frac{l_s+l_m}{2\Delta_{sp}}, \frac{\Delta_{sp}-\tau_{sp}+\tau_u}{2\Delta_{sp}}\right).
   \end{equation}
   When the spoofer achieves very good synchronization with $\Delta_{sp} < l_s+l_m<\Delta_u$, the probability is more dominated by the second term. Otherwise, the probability is more dominated by the first term. The probability of successful spoofing with repect to two terms is \begin{equation}
   \begin{split}
    P_1(_S|A) &= \frac{(l_s+l_m)}{2\Delta_{sp}}\cdot P(t_r- l_s\leq t_{ob}\leq t_r);\quad \text{or}\\
    P_2(_S|A) &= \frac{\Delta_{sp}-\tau_{sp}+\tau_u}{2\Delta_{sp}}\cdot P(t_r- l_s\leq t_{ob}\leq t_r).\\
    \end{split}
   \end{equation}
   In Case B, similarly, we have
  \begin{equation}
   \begin{split}
    P_1(_S|B) &= \frac{(0.5l_s+l_m)}{2\Delta_{sp}}\cdot P(t_{ob}\leq t_r-l_s);\quad \text{or}\\
    P_2(_S|B) &= \frac{\Delta_{sp}-\tau_{sp}+\tau_u}{2\Delta_{sp}}\cdot P(t_{ob}\leq t_r-l_s).\\
    \end{split}
   \end{equation}
   In Case C in Fig.~\ref{fig:toaspoof} when the observation starts after the authentic puls arrives, the probability is given by: 
  \begin{equation}
    P(t_{ob}> t_r) = \frac{\Delta_u - \tau_u}{2\Delta_u}. 
   \end{equation}
   The probability of successful spoofing in Case C is:
   \begin{equation}
   \begin{split}
    P_1(_S|C) &= \frac{(l_s+l_m)}{2\Delta_{sp}}\cdot P(t_{ob}> t_r);\quad \text{or}\\
    P_2(_S|C) &= \frac{\Delta_{sp}-\tau_{sp}+\tau_u}{2\Delta_{sp}}\cdot P(t_{ob}> t_r)\leq t_r).\\
    \end{split}
   \end{equation}
   Summarizeing all cases, the overall spoofing success rate $P_S$ bounded by whichever limiting condition dominates, $P_s = \min(P_1,P_2)$. To characterize the impact of different parameters, we analyze $P_1$ and $P_2$
   separately. Taking $l_s = \alpha \times l_m$,   
   \begin{equation}
    \begin{split}
    P_1 =&   P_1(_S|A) +  P_1(_S|B) +  P_1(_S|C)\\
    =& \frac{\big(4 + 3\alpha\big)\Delta_u l_m + \alpha^2 l_m^2 - \alpha l_m\tau_u}
{8\,\Delta_{sp}\,\Delta_u},
    \end{split}
   \end{equation}
   
   \begin{equation}
   \begin{split}
    P_2 =&   P_2(_S|A) +  P_2(_S|B) +  P_2(_S|C)\\
    &= \frac{\Delta_{sp}-\tau_{sp}+\tau_u}{2\Delta_{sp}}.
   \end{split}
\end{equation}

% In this section, we analyze the impact of different parameters on the spoofing rate $P_s$. , $P_s$ is, 
%   \begin{equation}
%     \begin{split}
%     P_s =&\frac{l_m}{2\,\Delta_{sp}}+\frac{3l_s}{8\,\Delta_{sp}}+\frac{l_s\,(l_s - \tau_u)}{8\,\Delta_{sp}\,\Delta_u}\\
%     =&\frac{4\alpha \Delta_u l_s + 3\Delta_u l_s+l_s^2-l_s \tau_u}{8\,\Delta_{sp}\,\Delta_u}.
%     \end{split}
%    \end{equation}

\subsubsection{Impact of Pulse Length} 
The derivative of $P_s$ with respect to $l_m$ is given by:
\begin{equation}
\begin{split}
    \frac{\partial P_1}{\partial l_m} = \frac{(4+3\alpha)\Delta_u + 2\alpha^2l_m - \alpha\tau_u}{8\,\Delta_{sp}\,\Delta_u}\quad\text{or}\quad
    \frac{\partial P_2}{\partial l_m}= 0.
    \end{split}
\end{equation}
Since $\tau_u \ll 3\Delta_u$, we have $\frac{\partial P_s}{\partial l_s} \geq 0$, indicating that the spoofing probability $P_s$ increases with pulse length $l_s$.

\subsubsection{Impact of Authentic Signal Travel Time} 
The derivative of $P_s$ with respect to $\tau_u$ is given by:
\begin{equation}
\frac{\partial P_1}{\partial \tau_u} = -\frac{l_s}{8\,\Delta_{sp}\,\Delta_u}\quad\text{or}\quad\frac{\partial P_2}{\partial \tau_u} = \frac{1}{2\Delta_{sp}}.
\end{equation}
For $P_1$, although $\frac{\partial P_1}{\partial \tau_u} < 0$, this term dominates only when $\Delta_{sp}$ is large, making the derivative magnitude small and thus the impact negligible. In contrast, $\frac{\partial P_2}{\partial \tau_u} = \frac{1}{2\Delta_{sp}}$ becomes significant when $\Delta_{sp}$ is small. Therefore, in the regime of tight synchronization (small $\Delta_{sp}$), the authentic signal travel time $\tau_u$ significantly influences the spoofing probability $P_s$.

\subsubsection{Impact of \gls{ue} Synchronization Quality} 
The derivative of $P_s$ with respect to the synchronization quality $\Delta_u$ is given by:
\begin{equation}
\frac{\partial P_1}{\partial \Delta_u}
= \frac{\alpha\,l_m\,(\tau_u - \alpha\,l_m)}
{8\,\Delta_{sp}\,\Delta_u^{2}}\quad\text{or}\quad\frac{\partial P_2}{\partial \Delta_u}
= 0.
\end{equation}
The sign of $\frac{\partial P_s}{\partial \Delta_u}$ depends on whether $\tau_u > \alpha\,l_m$. When the \gls{gnb} is very close (i.e., $\tau_u \leq \alpha\,l_m$), we have $\frac{\partial P_s}{\partial \Delta_u} \leq 0$. In the typical case where $\tau_u > \alpha\,l_m$, we have $\frac{\partial P_s}{\partial \Delta_u} > 0$, indicating that degraded \gls{ue} synchronization facilitates spoofing. Thus, $P_s$ generally increases with $\Delta_u$, except when the authentic signal experiences minimal propagation delay.

\subsubsection{Impact of Spoofer Synchronization Error} 
The derivative of $P_s$ with respect to $\Delta_{sp}$ is:
\begin{equation}\begin{split}
\frac{\partial P_1}{\partial \Delta_{sp}}
&= -\,\frac{(4 + 3\alpha)\,\Delta_u\,l_m
      + \alpha^2 l_m^2
      - \alpha\,l_m\,\tau_u}
     {8\,\Delta_u\,\Delta_{sp}^{2}}\\
\text{or}\quad \frac{\partial P_2}{\partial \Delta_{sp}}
&= \frac{\tau_{sp} - \tau_u}{2\,\Delta_{sp}^{2}}.
\end{split}
\end{equation}
Since $\tau_u \ll 4\Delta_u+3\alpha\Delta_u+\alpha l_m$, we have $\frac{\partial P_1}{\partial\Delta_{sp}} < 0$, indicating that $P_s$ decreases as spoofer synchronization degrades. For $P_2$, the sign of $\frac{\partial P_2}{\partial \Delta_{sp}}$ depends on the relative geometry. In the typical scenario where the spoofer is closer to the \gls{ue} than the legitimate \gls{gnb} (i.e., $\tau_{sp} < \tau_u$), we have $\frac{\partial P_2}{\partial \Delta_{sp}} < 0$, meaning improved spoofer synchronization significantly enhances attack success. Only when $\tau_{sp} > \tau_u$ does this relationship weaken or reverse.
\end{appendices}

\end{document}